\definecolor{darkred}{rgb}{0.5,0,0}
\definecolor{darkgreen}{rgb}{0,0.5,0}
\definecolor{darkblue}{rgb}{0,0,0.5}
\numberwithin{equation}{section}    
\newcommand{\mynewtheorem}[4][]{
  \ifthenelse{\equal{#1}{}}{    
    \newtheorem{#2}{#3}         
  }{
    \newaliascnt{#2}{#1}        
    \newtheorem{#2}[#2]{#3}     
    \aliascntresetthe{#2}       
  }
  \crefname{#2}{#3}{#4}         
}
\theoremstyle{plain}
\newtheorem{Theorem}{Theorem}[section]
\newtheorem{IntroTheorem}{Theorem}
\newtheorem*{Theorem*}{Theorem}
\crefname{Theorem}{Theorem}{Theorems}
\theoremstyle{definition}
\theoremstyle{remark}
\renewcommand{\epsilon}{\varepsilon}
\newcommand{\ve}{\varepsilon}
\renewcommand{\phi}{\varphi}
\DeclareMathOperator{\diam}{diam}
\DeclareMathOperator{\id}{id}
\providecommand{\abs}[1]{\lvert#1\rvert}
\providecommand{\bigabs}[1]{\bigl\lvert#1\bigr\rvert}
\providecommand{\biggabs}[1]{\biggl\lvert#1\biggr\rvert}
\providecommand{\argmt}{\mathchoice{{}\cdot{}}{{}\cdot{}}{{}\bullet{}}{{}\bullet{}}}
\providecommand{\bdry}[1][]{\partial\ifthenelse{\isempty{#1}}{}{^{#1}}}
\providecommand{\setsize}{\abs}
\providecommand{\ceil}[1]{\lceil#1\rceil}
\providecommand{\biggceil}[1]{\biggl\lceil#1\biggr\rceil}
\providecommand{\floor}[1]{\lfloor#1\rfloor}
\providecommand{\norm}[2][]{\lVert#2\rVert\ifthenelse{\equal{}{#1}}{}{_{#1}}}
\providecommand{\bignorm}[2][]{\bigl\lVert#2\bigr\rVert\ifthenelse{\equal{}{#1}}{}{_{#1}}}
\providecommand{\Bignorm}[2][]{\Bigl\lVert#2\Bigr\rVert\ifthenelse{\equal{}{#1}}{}{_{#1}}}
\providecommand{\biggnorm}[2][]{\biggl\lVert#2\biggr\rVert\ifthenelse{\equal{}{#1}}{}{_{#1}}}
\providecommand{\Biggnorm}[2][]{\Biggl\lVert#2\Biggr\rVert\ifthenelse{\equal{}{#1}}{}{_{#1}}}
\providecommand{\Norm}[2][]{\left\lVert#2\right\rVert\ifthenelse{\equal{}{#1}}{}{_{#1}}}
\providecommand{\spr}[3][]{{\langle#2,#3\rangle}\ifthenelse{\equal{}{#1}}{}{_{#1}}}
\providecommand{\Spr}[3][]{\left\langle#2,#3\right\rangle\ifthenelse{\equal{}{#1}}{}{_{#1}}}
\providecommand{\from}{\colon}
\providecommand{\unitmass}[1]{\delta_{#1}}
\providecommand{\xto}{\xrightarrow}
\providecommand{\qtext}{\quad\text}
\providecommand{\textq}[1]{\text{#1}\quad}
\providecommand{\qtextq}[1]{\quad\text{#1}\quad}
\providecommand{\Ioo}[3][]{\mathopen(#2,#3\mathclose)\ifthenelse{\equal{}{#1}}{}{_{#1}}}
\providecommand{\Ico}[3][]{\mathopen[#2,#3\mathclose)\ifthenelse{\equal{}{#1}}{}{_{#1}}}
\providecommand{\Ioc}[3][]{\mathopen(#2,#3\mathclose]\ifthenelse{\equal{}{#1}}{}{_{#1}}}
\providecommand{\Icc}[3][]{\mathopen[#2,#3\mathclose]\ifthenelse{\equal{}{#1}}{}{_{#1}}}
\providecommand{\ifu}[1]{\mathbf1\ifthenelse{\equal{#1}{}}{}{_{#1}}}
\providecommand{\dnto}{\searrow}%
\providecommand{\isect}{\cap}
\providecommand{\Isect}{\bigcap}
\providecommand{\union}{\cup}
\providecommand{\Union}{\bigcup}
\providecommand{\dunion}{\mathop{\dot\union}}
\providecommand{\tensor}{\otimes}
\providecommand{\Tensor}{\bigotimes}
\providecommand{\dirac}[1]{\delta_{#1}}
\providecommand{\constf}[1][f]{K_{#1}}
\providecommand{\constb}[1][f]{D_{#1}}
\let\originald\d 
\renewcommand{\d}{\ifthenelse{\boolean{mmode}}{\mathrm d}{\originald}}
\providecommand{\dd}{\,\d}
\def\Int#1d{\int#1\dd} 
\newcommand{\B}{\mathbb{B}}
\newcommand{\E}{\mathbb{E}}
\newcommand{\N}{\mathbb{N}}
\let\originalP\P 
\renewcommand{\P}{\ifthenelse{\boolean{mmode}}{\mathbb P}{\originalP}}
\newcommand{\R}{\mathbb{R}}
\newcommand{\Z}{\mathbb{Z}}
\newcommand{\cA}{\mathcal{A}}
\newcommand{\cB}{\mathcal{B}}
\providecommand{\Borel}{\cB}
\newcommand{\cD}{\mathcal{D}}
\newcommand{\cF}{\mathcal{F}}
\newcommand{\cM}{\mathcal{M}}
\newcommand{\cP}{\mathcal{P}}
\newcommand{\cS}{\mathcal{S}}
\newcommand{\cT}{\mathcal{T}}
\newcommand{\cU}{\mathcal{U}}
\newcommand{\cY}{\mathcal{Y}}
\providecommand{\given}{\mid}
\providecommand{\superOmega}{\underline{\Omega}}%
\providecommand{\superomega}{\underline{\omega}}%
\providecommand{\superP}{\underline{\P}}%
\providecommand{\superE}{\underline{\E}}%
\providecommand{\superL}{\underline{L}}%
\providecommand{\superA}{\underline{\cA}}%
\providecommand{\superX}{X_0}%
\newcommand{\hm}[1]{\textbf{*}\leavevmode{\marginpar{\tiny%
$\hbox to 0mm{\hspace*{-0.5mm}$\leftarrow$\hss}%
\vcenter{\vrule depth 0.1mm height 0.1mm width \the\marginparwidth}%
\hbox to 0mm{\hss$\rightarrow$\hspace*{-0.5mm}}$\\\relax\raggedright #1}}}
\title[Glivenko--Cantelli Theory for almost additive fields on amenable groups]
{Glivenko--Cantelli Theory, Ornstein--Weiss quasi-tilings, and uniform Ergodic Theorems
for distribution-valued fields over amenable groups}
\author{Christoph Schumacher}
\address[CS \& IV]{Fakult\"at f\"ur Mathematik, TU Dortmund, 44221 Dortmund, Germany}
\author{Fabian Schwarzenberger}
\address[FS]{Fakult\"at f\"ur Informatik/Mathematik, HTW Dresden, 01069 Dresden, Germany}
\author{Ivan Veseli\'c}
\thanks{MSC: 
60F99,
60B12,
62E20,
60K35. \\
Keywords:
F{\o}ler sequence,
amenable group,
quasi-tilings 
Glivenko--Cantelli theory,
Uniform convergence,
Empirical measures.\\
\today, \jobname.tex}
\begin{document}

\maketitle


\begin{abstract}
We consider random fields indexed by finite subsets of an amenable
discrete group, taking values in the Banach-space of bounded
right-continuous functions. The field is assumed to be equivariant,
local, coordinate-wise monotone, and almost additive, with finite range
dependence. Using the theory of quasi-tilings we prove an uniform ergodic
theorem, more precisely, that averages along a Foelner sequence converge
uniformly to a limiting function. Moreover we give explicit error
estimates for the approximation in the sup norm.
\end{abstract}

\section{Introduction}

Ergodic theorems for Banach space valued functions or fields
have been studied among others in \cite{LenzMV-08,LenzSV-10,PogorzelskiS-16}
in a combinatorial setting.
The three quoted papers consider different group actions
in increasing generality:
the lattice~$\Z^d$, monotilable amenable discrete groups
and general amenable discrete groups, respectively.
Note that amenability is a natural assumption
for the validity of the ergodic theorem, as shown explicitly in
\cite{Tao-15}. Already before that combinatorial ergodic theorems for Banach space valued functions
have been proven in the context of Delone dynamical systems,
see \cite{LenzS-05} and the references therein.

The combinatorial framework offers the advantage of a minimum of probabilistic or measure theoretic assumptions,
the necessary one being that frequencies or densities of finite patterns are well defined and can be approximated by an exhaustion
(corresponding to a law of large numbers).
A disadvantage of the combinatorial approach chosen,
is that the range of colours
(or the alphabet corresponding to the values of the random variables)
needs to be finite.
Also, the derived ergodic theorems are in a sense conditional:
The convergence bound depends on the speed of convergence of the pattern frequencies.

Our present research aims at dispensing with the finiteness condition on the set of colours.
The price to pay is that we have to assume more probabilistic structure and in particular independence or at least finite range correlations.
In return, this structure yields automatically quantitative approximation error bounds.
No extra assumptions on the speed of convergence of the pattern frequencies is needed.
For the case of fields defined over $\Z^d$ and $\Z^d$-actions
we have established such an ergodic theorem in \cite{SchumacherSV-16},
which takes on the form of a Glivenko--Cantelli theorem,
and  which we recall now in an informal way.

\begin{IntroTheorem}[\cite{SchumacherSV-16}]
  Let $\Lambda_n=\Ico0n^d \isect \Z^d$, and
  $\omega = (\omega_g)_{g \in \Z^d} \in \R^{\Z^d}$
  be an i.\,i.\,d.\ sequence of real random variables.
  Assume the field
  \begin{equation*}
    f\from\cP(\Z^d)\times\R^{\Z^d}\to\B:=\{D\from\R\to\R\mid
    \text{$D$ right-continuous and bounded}\}
  \end{equation*}
  is $\Z^d$-equivariant, monotone in each coordinate~$\omega_g$,
  local, and almost additive, i.\,e.\ for disjoint
  $Q_1,\dots,Q_n\subseteq \Z^d$ and $Q:=\bigcup_{i=1}^nQ_i$ we have
  \[
    \Bignorm{f(Q,\omega)-\sum_{i=1}^nf(Q_i,\omega)}_\infty
    \le\sum_{i=1}^n\bigabs{\partial Q_i}\text,
  \]
  where $\partial Q_i$ denotes the boundary set.
  Assume furthermore, that
  $f_\infty:=\sup_\omega\norm[\infty]{f(\id ,\omega)}<\infty$.

  Then there is a function $f^*\colon \R\to\R$ such that for each $m\in\N$,
  there exist $a(m),b(m)>0$, such that for all $j\in\N$, $j>2m$,
  there is an event $\Omega_{j,m}\subseteq \R^{\Z^d}$, with the properties
  \begin{equation*}
    \P(\Omega_{j,m})
    \ge1-b(m)\exp\bigl(-a(m)\setsize{\Lambda_j}\bigr)
  \end{equation*}
  and
  \begin{equation*}
    \forall \omega \in \Omega_{j,m}\colon \quad
    \biggnorm[\infty]{\frac{f(\Lambda_j,\omega)}{\setsize{\Lambda_j}}- f^*}
      \le2^{2d+1}\Bigl(\frac{(6d+3+2f_\infty)m^d+1}{j-2m}
        +\frac4m\Bigr)
  \end{equation*}
  In particular, almost surely we have
  $\lim\limits_{n\to\infty}
  \bignorm{\frac{f(\Lambda_n,\argmt)}{\setsize{\Lambda_n}}-f^*}_\infty
  =0$.
\end{IntroTheorem}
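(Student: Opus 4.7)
The plan is to combine three ingredients: a deterministic tiling reduction using almost additivity, a Hoeffding concentration estimate (using locality plus the i.i.d.\ hypothesis to extract independence), and a Glivenko--Cantelli union bound to upgrade pointwise-in-$\lambda$ concentration to uniform convergence in sup-norm on the output space $\B$. The three errors combine into the $\frac{m^d}{j-2m}+\frac{1}{m}$ structure visible in the statement.

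First I tile $\Lambda_j$ by disjoint translates $T_k:=\Lambda_m+mk$ for $k$ in the maximal index set $K$ with $T_k\subseteq\Lambda_j$, so that $\abs{K}=\lfloor j/m\rfloor^d$ and the remainder $R:=\Lambda_j\setminus\bigcup_k T_k$ sits in an $m$-thick slab around $\bdry\Lambda_j$ with $\abs{R}\le C_d\,m\,j^{d-1}$. Almost additivity together with the sup-norm bound $f_\infty$ on singletons gives
\[
  \Bignorm{f(\Lambda_j,\omega)-\sum_{k\in K}f(T_k,\omega)}_\infty
    \le \sum_{k\in K}\abs{\bdry T_k}+\abs{\bdry R}+f_\infty\abs{R},
\]
which after dividing by $\abs{\Lambda_j}=j^d$ contributes a deterministic error of order $\frac{1}{m}+\frac{m}{j}$. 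By locality and the i.i.d.\ nature of $\omega$, the $f(T_k,\omega)$ are i.i.d.\ across $k$ (possibly after inflating $m$ by the fixed finite range of dependence).

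Second, for each fixed $\lambda\in\R$, Hoeffding's inequality applied to the real-valued random variables $f(T_k,\omega)(\lambda)$, each bounded in absolute value by $\sim f_\infty m^d$, yields
\[
  \P\Bigl(\bigabs{\tfrac{1}{\abs{\Lambda_j}}\textstyle\sum_{k\in K}\bigl(f(T_k,\omega)(\lambda)-\E[f(\Lambda_m,\omega)(\lambda)]\bigr)}>\tfrac{1}{m}\Bigr)
    \le 2\exp\bigl(-c(f_\infty)\abs{\Lambda_j}/m^{d+2}\bigr).
\]
Passing from pointwise to uniform in $\lambda$ exploits the bounded, right-continuous, distribution-like shape of the outputs: the mean $\bar f_m:=\E f(\Lambda_m,\omega)/m^d$ is sandwiched between monotone envelopes that differ by $1/m$ on a grid $\lambda_1<\dots<\lambda_N$ of cardinality $N=O(m\,f_\infty)$, and a union bound over this grid — combined with monotonicity of the outputs in $\lambda$ between consecutive grid points — yields an exponential tail uniform in $\lambda$ with prefactors $a(m),b(m)$ polynomial in $m$.

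Third, the limit $f^*$ is defined as the $m\to\infty$ limit of $\bar f_m$; its existence as a uniform limit in $\B$ follows from a Fekete-type subadditivity argument applied to almost additivity, tiling $\Lambda_{\ell m}$ by copies of $\Lambda_m$ and passing through the expectation. Combining the tiling error $\sim\frac{1}{m}+\frac{m}{j}$, the concentration error $\sim\frac{1}{m}$, and the Cauchy-type bound $\norm{\bar f_m-f^*}_\infty\le\frac{C}{m}$ yields the quantitative statement; almost sure convergence follows from Borel--Cantelli with $m=m(j)$ chosen so that $\exp(-a(m)\abs{\Lambda_j})$ is summable in $j$. The main obstacle, as usual in Glivenko--Cantelli arguments, will be keeping the grid cardinality $N(m)$ polynomial in $m$ so that the union bound does not destroy the exponential concentration; this forces one to extract monotonicity of the outputs in the real argument from the coordinate-wise monotonicity in $\omega$ combined with the distributional shape inherited from the hypothesis $f_\infty<\infty$.
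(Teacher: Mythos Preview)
Your overall three-step architecture (tiling reduction, concentration, Cauchy argument for $f^*$) matches the paper, but the second step contains a genuine gap. You propose to pass from pointwise-in-$\lambda$ Hoeffding bounds to a uniform bound via a finite grid in~$\lambda$ and a union bound, and you recognise this requires monotonicity of the outputs in the real argument~$\lambda$. But elements of~$\B$ are only bounded and right-continuous; no monotonicity in~$\lambda$ is assumed, and there is no mechanism to ``extract'' it from coordinate-wise monotonicity in~$\omega$ --- those are unrelated directions. A right-continuous bounded function can oscillate arbitrarily often, so no finite grid in~$\lambda$ controls the sup-norm, and the union bound collapses.

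The paper (following \cite{SchumacherSV-16}) uses the monotonicity hypothesis the other way round. For each fixed $E\in\R$, the map $\omega_\Lambda\mapsto f_\Lambda(\omega_\Lambda)(E)$ is a bounded monotone function on $\R^{\setsize\Lambda}$. One then invokes a multivariate Glivenko--Cantelli theorem for the whole class~$\cM$ of bounded monotone functions on~$\R^k$ (DeHardt, Wright; stated here as \cref{dewright}): this class has finite bracketing numbers, so the empirical process indexed by~$\cM$ concentrates uniformly with an exponential tail. Since $\{\omega\mapsto f_\Lambda(\omega)(E)/\setsize\Lambda\mid E\in\R\}\subseteq\cM$, the supremum over~$E$ is controlled automatically --- no grid in~$E$ is ever formed. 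In short, the $\sup_\lambda$ is absorbed into a Donsker-type class in the $\omega$-variable, not discretised in the output variable. Your Hoeffding-plus-grid route would only go through under the extra, application-specific assumption that each $f(\Lambda,\omega)$ is itself monotone as a function of its real argument.
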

For a precise formulation of the properties of the field~$f$
see Section~\ref{sec:results}.
Let us note that in our Theorem~$f$ takes values in the Banach space~$\B$
of right continuous and bounded functions with $\sup$-norm
while in \cite{LenzMV-08,LenzSV-10,PogorzelskiS-16}
an arbitrary Banach space was allowed.
This restriction is due to our use of the Glivenko--Cantelli theory in the
proof and currently we do not know how to extend it to arbitrary Banach spaces.

Naturally one asks whether the above result and its proof extend to general finitely generated amenable groups.
In this case, obviously, the boundary has to be taken with respect to a generating set $S\subseteq G$,
and the sequence of squares~$\Lambda_n$ has to be replaced  by a F{\o}lner sequence.
Indeed, if~$G$ satisfies additionally
\begin{enumerate}[($\boxplus$)]
  \item\label{tiling}
    There exists  a F{\o}lner sequence $(\Lambda_n)_{n\in\N}$ in~$G$,
    and a sequence of symmetric  \emph{grids} $T_n=T_n^{-1}\subseteq G$
    such that     $G=\dot\bigcup_{t\in T_n}\Lambda_nt$  is a disjoint union.
\end{enumerate}
the proofs of \cite{SchumacherSV-16}
apply with technical, but no strategic, modifications, as sketched in Appendix \ref{monotile}.

However, it is \emph{not clear} in which generality assumption \labelcref{tiling} holds.
In fact, the existence of tiling F{\o}lner sequences (for general amenable groups) has been investigated in several instances.
It turned out that there exist useful additional conditions which imply the validity of \labelcref{tiling}, cf.~\cite{Weiss-01, Krieger}.
For instance, a group which is residually finite and amenable contains a tiling F{\o}lner sequence.
Unfortunately, there is a lack of the complete picture: It is still an open question whether there exists a tiling F{\o}lner sequence in each amenable group.

Since this question seems hard to answer, Ornstein and Weiss invented in \cite{OrnsteinW-87} the theory of $\epsilon$-quasi tilings.
The idea is to consider a tiling which is in several senses weaker as the one in \labelcref{tiling}.
For a given $\epsilon>0$ one has the following properties:
\begin{itemize}
  \item the group is not tiled with one element of a F{\o}lner sequence, but with finitely many elements of this sequence; the number of these elements depends on~$\epsilon$;
  \item the tiles are allowed to overlap, but the proportion of the part of any tile which is allowed to intersect other tiles is at most of size~$\epsilon$.
    This property is called $\epsilon$-disjointness;
  \item each element of a F{\o}lner sequence with a sufficiently large index is, up to a proportion of size $\epsilon$ the union of $\epsilon$-disjoint tiles.
\end{itemize}
The authors showed that each amenable group can be $\epsilon$-quasi tiled.
In \cite{PogorzelskiS-16} these ideas have been developed further
in order to obtain quantitative estimates on the portion
which is covered by translates of one specific element of the tiles.
The proof of our main result, which we state now in an informal way,
is based on these results on quasi tilings.

\begin{IntroTheorem}
  Let~$(\Lambda_n)$ be a F{\o}lner sequence in a finitely generated
  group~$G$. Let $\omega=(\omega_g)_{g\in G}\in\R^G$
  be an i.\,i.\,d.\ sequence of real random variables.
  Assume the field
  \[
    f \from\cP(G)\times\R^G\to\{D\from\R\to\R\mid\text{$D$ right-continuous and bounded}\},
  \]
  is $G$-equivariant, monotone in each coordinate $\omega_g$, local, and almost additive, i.\,e.\ for disjoint
  $Q_1,\dots,Q_n\subseteq G$ and $Q:=\bigcup_{i=1}^nQ_i$ we have
  \[
    \Bignorm{f(Q,\omega)-\sum_{i=1}^nf(Q_i,\omega)}_\infty
    \le\sum_{i=1}^n\bigabs{\partial Q_i}\text,
  \]
  where $\partial Q_i$ denotes the boundary relative to a set of generators $S\subseteq G$.
  Assume furthermore, that $f_\infty:=\sup_\omega\norm[\infty]{f(\id ,\omega)}<\infty$.

  Then there is a function $f^*\colon \R\to\R$ such that for each $\delta\in\Ioo0{1}$,
  there exist $a(\delta)>0$, such that for all sufficiently large $j\in \N$,
  there is an event $\Omega_{j,\delta}\subseteq \R^G$, with the properties
  \begin{equation*}
    \P(\Omega_{j,\delta})
    \ge1-     \exp\bigl(-a(\delta)\setsize{\Lambda_j}\bigr)
  \end{equation*}
  and
  \begin{align*}
    \forall \omega \in \Omega_{j,\delta} \quad \quad
    \biggnorm{\frac{f(\Lambda_j,\omega)}{\setsize{\Lambda_j}}-f^*}_\infty&
    \le(37 f_\infty+84\setsize S+131)\delta\text.
  \end{align*}
  In particluar,  almost surely we have $
  \lim\limits_{n\to\infty}
  \Bignorm[\infty]{\frac{f(\Lambda_n,\argmt)}{\setsize{\Lambda_n}}-f^*}
  =0$.
\end{IntroTheorem}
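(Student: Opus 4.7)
The strategy combines three ingredients: a Fekete-type convergence of the normalized means $\E[f(\Lambda_n,\argmt)]/\abs{\Lambda_n}$ to a limit $f^*\in\B$, a McDiarmid concentration of empirical averages around their mean, and the Ornstein--Weiss $\delta$-quasi-tiling from \cite{PogorzelskiS-16} decomposing $\Lambda_j$ into finitely many $\delta$-disjoint translates of F{\o}lner sets $\Lambda_{k_1},\dots,\Lambda_{k_N}$ covering all but a $\delta$-fraction of $\Lambda_j$. Coordinate-wise monotonicity forces each output $f(\Lambda,\omega)\in\B$ to be monotone in its real argument, which is what permits the final Glivenko--Cantelli discretization step.

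First I would construct $f^*$. Almost additivity, equivariance (so that $\E[f(\Lambda,\argmt)]=\E[f(\Lambda g,\argmt)]$ for all $g\in G$), and the F{\o}lner property $\abs{\bdry\Lambda_n}/\abs{\Lambda_n}\to0$ together make $\E[f(\Lambda_n,\argmt)]/\abs{\Lambda_n}$ Cauchy in the sup-norm, hence convergent to some $f^*\in\B$. Next I would set up the quasi-tiling: for the prescribed $\delta$, pick parameters $k_1<\dots<k_N$ and translation sets $T_i$ with $\{\Lambda_{k_i}t : i\le N,\,t\in T_i\}$ pairwise $\delta$-disjoint and $\sum_i\abs{T_i}\abs{\Lambda_{k_i}}\le(1+O(\delta))\abs{\Lambda_j}$. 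Almost additivity then yields
\begin{equation*}
  \Bignorm{f(\Lambda_j,\omega)-\sum_{i=1}^N\sum_{t\in T_i}f(\Lambda_{k_i}t,\omega)}_\infty
  \le C(f_\infty,\abs{S})\,\delta\,\abs{\Lambda_j},
\end{equation*}
absorbing the boundary cost $\sum_i\abs{T_i}\abs{\bdry\Lambda_{k_i}}$ (small by the F{\o}lner property), the overlap cost (bounded by $\delta f_\infty\abs{\Lambda_j}$ by $\delta$-disjointness), and the uncovered remainder (bounded by $f_\infty\delta\abs{\Lambda_j}$).

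For the probabilistic step, I would fix $x\in\R$, use equivariance to rewrite $f(\Lambda_{k_i}t,\omega)(x)=f(\Lambda_{k_i},t^{-1}\omega)(x)$, and exploit locality of $f$ with the $\delta$-disjointness of the tiles to realize most of the summands, after a harmless thickening by the locality radius, as an independent family with identical distribution. McDiarmid's bounded-differences inequality, with per-coordinate oscillation of order $f_\infty/\abs{\Lambda_j}$, then yields
\begin{equation*}
  \P\Bigl(\bigabs{\tfrac{1}{\abs{\Lambda_j}}\textstyle\sum_{i,t}f(\Lambda_{k_i}t,\omega)(x)-f^*(x)}>C'\delta\Bigr)
  \le\exp\bigl(-a(\delta)\abs{\Lambda_j}\bigr),
\end{equation*}
once the Cauchy estimate from the first step is used to replace the weighted average of $\E[f(\Lambda_{k_i},\argmt)(x)]/\abs{\Lambda_{k_i}}$ by $f^*(x)$.

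Finally I would upgrade the pointwise concentration to uniformity in $x$ by the classical Glivenko--Cantelli trick: choose a finite grid $x_1<\dots<x_M$ with $M=O(1/\delta)$ such that $f^*(x_{r+1})-f^*(x_r)\le\delta$ (possible since $f^*$ is bounded and monotone); a union bound over the $M$ sample points preserves the exponential probability estimate, while monotonicity of $f(\Lambda_j,\omega)$ in $x$ bridges between consecutive grid points at cost $O(\delta)$. The main obstacle is the quantitative bookkeeping: all error sources---boundary defect, $\delta$-overlap, uncovered remainder, McDiarmid deviation, finite-range thickening, and the monotone discretization slack---must be tracked through the explicit Ornstein--Weiss estimates of \cite{PogorzelskiS-16} and combined to match the announced constant $(37f_\infty+84\abs{S}+131)\delta$ exactly.
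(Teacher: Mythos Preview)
Your final step rests on a false inference. The monotonicity hypothesis is antitone in the \emph{configuration} $\omega$: if $\omega\le\omega'$ then $f(\Lambda,\omega)(x)\ge f(\Lambda,\omega')(x)$ for each fixed $x$. This says nothing about how $x\mapsto f(\Lambda,\omega)(x)$ behaves; elements of~$\B$ are merely bounded and right-continuous, not monotone. So the classical Glivenko--Cantelli gridding in~$x$ that you invoke to pass from pointwise to uniform concentration is unavailable, and the argument breaks at exactly the place where uniformity in~$x$ is needed. The paper exploits the monotonicity hypothesis in a completely different way: for each fixed $E\in\R$, the map $\omega_\Lambda\mapsto f_\Lambda(\omega_\Lambda)(E)$ is monotone on $\R^{|\Lambda|}$, and a \emph{multivariate} Glivenko--Cantelli theorem (Theorem~\ref{dewright}, from DeHardt and Wright) gives convergence of the empirical integrals uniformly over the class of bounded monotone functions on $\R^{|\Lambda|}$, hence uniformly over~$E$.

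There is a second, subtler problem in your probabilistic step. In a $\delta$-quasi-tiling the translates $K_it$, $t\in T_i$, genuinely overlap, so the samples $f(K_it,\omega)$ are \emph{not} independent even in the i.i.d.\ case; McDiarmid does not apply to the tile sum as written, and your phrase ``harmless thickening'' does not explain how to manufacture independence. The paper handles this with a resampling construction (Lemmas~\ref{lemma:independence} and~\ref{lemma:insertindependence}, with the abstract Theorem~\ref{thm:resampling}): on an enlarged probability space one builds surrogates $X^{i,j,t}$ that agree with the original sample on the pairwise-disjoint cores $U^{i,j,t}\subseteq K_i^rt$ and are genuinely independent across~$t$; the error from replacing $\omega$ by the surrogate on the small overlap $K_i^rt\setminus U^{i,j,t}$ is controlled by almost additivity and contributes another $O(\delta)$ term.
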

For a precise formulation, see Definition~\ref{def:admissible} and Theorem \ref{thm:main}.
To achieve the error bound in the theorem, we work with an $\ve$-quasi tiling with $\ve=\delta^2$.

\begin{Remark}
 Let us sketch the difference between the proof of Theorem B (see also \cref{thm:main} below)
  and the Theorem  2.8 of \cite{SchumacherSV-16} sketched as Theorem A above.
  There we heavily relied on the fact that~$\Z^d$ can be tiled exactly with any cube of integer length.
  Since a general discrete amenable group need not have such a tiling,
  we have to modify the \emph{geometric parts} of the proof
  and use $\ve$-quasi tilings as in \cite{OrnsteinW-87,PogorzelskiS-16}.
  Since quasi tilings in general overlap,
  we loose independence of the corresponding random variables.
  This requires a change in the \emph{probabilistic part}
  of the proof and in particular the use of resampling.

\end{Remark}


The structure of the paper is as follows.
In \cref{sec:results}, we precisely describe the model and our result.
In \cref{sec:quasitilings} we summarize results about $\ve$-quasi tilings, which are fundamental for our proof.
The error estimate in the main theorem and the corresponding approximation procedure naturally split in three parts,
which are treated consecutively in \cref{secboundary,sec:GC,sec:cauchy}.
\Cref{secboundary} is of geometric nature.
\Cref{sec:GC} is based on multivariate Glivenko--Cantelli theory.
\Cref{sec:cauchy} is geometric in spirit again.
In the Appendix  we prove a resampling lemma and
indicate how the proof of \cite{SchumacherSV-16} could be adapted to cover monotileable amenable groups.


\section{Model and main results}\label{sec:results}
We start this section with the introduction of the geometric and probabilistic setting:
We recall the notion of a Cayley graph of an amenable group~$G$,
introduce random colorings of vertices,
and define so-called admissible fields,
which are random functions mapping finite subsets of~$G$
to functions on~$\R$ and satisfying a number of natural properties,
cf.~Definition~\ref{def:admissible}.
We are then in the position to formulate our main Theorem~\ref{thm:main}.

Let~$G$ be a finitely generated group and $S=S^{-1}\subseteq G\setminus\{\id\}$ a finite generating system.
Obviously~$G$ is countable.
The set of all finite subsets of~$G$ is denoted by~$\cF$ and is countable as well.
Throughout this paper we will assume that~$G$ is amenable,
i.\,e.\ there exists a squence $(\Lambda_n)_{n\in\N}$ of elements in~$\cF$
such that for each $K\in\cF$ one has
\begin{equation}\label{eq:folner}
  \frac{\setsize{\Lambda_n\triangle K\Lambda_n}}{\setsize{\Lambda_n}}
  \xto{n\to\infty}0\text.
\end{equation}
Here, $K\Lambda_n:=\{kg\mid k\in K,g\in\Lambda_n\}$
is the pointwise group multiplication of sets,
$\Lambda_n\triangle K\Lambda_n$
denotes the symmetric difference between the sets~$\Lambda_n$ and~$K\Lambda_n$,
and $\setsize{A}$ denotes the cardinality of the finite set~$A$.
A sequence $(\Lambda_n)_{n\in\N}$ satisfying property~\eqref{eq:folner}
is called \emph{F{\o}lner sequence}.

The pair $(G,S)$ gives rise to an undirected graph $\Gamma(G,S)=(V,E)$
with vertex set $V:=G$ and edge set
$  E:=\{\{x,y\} \mid xy^{-1}\in S\}$.
The graph $\Gamma(G,S)$ is known as the \emph{Cayley graph} of~$G$
with respect to the generating system~$S$.
Note that by symmetry of~$S$ the edge set~$E$ is well-defined.
Let $d\from G\times G\to\N_0$ denote the usual graph metric of $\Gamma(G,S)$.
The distance between two non-empty sets $\Lambda_1,\Lambda_2\subseteq G$
is given by
\begin{equation*}
  d(\Lambda_1,\Lambda_2)
  :=\min\{d(x,y)\mid x\in \Lambda_1,y\in\Lambda_2\}\text.
\end{equation*}
In the case where $\Lambda_1=\{x\}$ consists of only one element, we write
$d(x,\Lambda_2)$ for $d(\{x\},\Lambda_2)$.
The diameter of a non-empty set $\Lambda\in\cF$
is defined by $\diam(\Lambda):=\max\{d(x,y)\mid x,y\in\Lambda\}$.

Given $r\ge0$, the \emph{$r$-boundary} of a set $\Lambda\subseteq G$
is defined by
\begin{equation*}
  \bdry[r](\Lambda)
  :=\{x\in \Lambda\mid d(x,G\setminus\Lambda)\le r\}
    \union\{x\in G\setminus\Lambda\mid d(x,\Lambda)\le r\}\text.
\end{equation*}
and besides this we use the notation
\begin{equation*}
  \Lambda^r:=\Lambda\setminus\bdry[r](\Lambda)
  =\{x\in\Lambda\mid d(x,G\setminus\Lambda)>r\}\text.
\end{equation*}
It is easy to verify that for a given F{\o}lner sequence $(\Lambda_n)_{n\in\N}$,
or $(\Lambda_n)$ for short, and $r\ge0$ we have
\begin{equation}\label{eq:folner2}
  \lim_{n\to\infty}\frac{\setsize{\bdry[r](\Lambda_n)}}{\setsize{\Lambda_n}}=0
\quad\text{and}\quad
  \lim_{n\to\infty} \frac{\setsize{\Lambda_n^r}}{\setsize{\Lambda_n}}=1\text.
\end{equation}
Moreover, if $(\Lambda_n)$ is a F{\o}lner sequence, then for arbitrary $r\ge0$
the sequence $(\Lambda_n^r)$ is a F{\o}lner sequence as well.
Conversely, in order to show that a given sequence $(\Lambda_n)$
is a F{\o}lner sequence, it is sufficient \cite{Adachi-93,diss-fabian}
to show for $n\to\infty$ either
\begin{align}\label{eq:folner:suff}
  \frac{\setsize{\Lambda_n \triangle S\Lambda_n}}{\setsize{\Lambda_n}}\to0
  \qquad\text{or}\qquad
  \frac{\setsize{\bdry[1](\Lambda_n)}}{\setsize{\Lambda_n}}\to0
  \text.
\end{align}

Let us introduce colorings of the group~$G$
(or equivalently colorings of the vertices of $\Gamma(G,S)$).
We choose a (finite or infinite) set of possible colors $\cA\in\Borel(\R)$.
The sample set,
\begin{equation*}
  \Omega=\cA^{G}=\{\omega=(\omega_g)_{g\in G}\mid\omega_j\in\cA\}\text,
\end{equation*}
is the set of all possible colorings of~$G$.
Note that~$G$ acts in a natural way via translations on~$\Omega$.
To be precise, we define for each $g\in G$
\begin{equation}\label{eq:def:tau}
  \tau_g\from\Omega\to\Omega\textq,
  (\tau_g\omega)_x=\omega_{xg}\textq,(x\in G)\text.
\end{equation}
Next, we introduce random colorings.
As the $\sigma$-algebra we choose~$\cB(\Omega)$,
the product $\sigma$-algebra on~$\Omega$ generated by cylinder sets.
Oftentimes, we are interested in (finite) products of~$\cA$
embedded in the infinite product space~$\Omega$.
To this end, we set for $\Lambda\subseteq G$
\begin{equation*}
  \Omega_\Lambda
  :=\cA^\Lambda
  :=\{(\omega_g)_{g\in \Lambda}\mid \omega_g\in\cA\}
\end{equation*}
and define
\begin{equation*}
  \Pi_\Lambda\from\Omega\to \Omega_\Lambda\qquad \text{by}\qquad
  (\Pi_\Lambda (\omega))_g := \omega_g\qtextq{ for each} g\in \Lambda.
\end{equation*}
As shorthand notation we write~$\omega_\Lambda$
instead of $\Pi_\Lambda(\omega)$.
Having introduced the measurable space $(\Omega,\cB(\Omega))$,
we choose a probability measure~$\P$ with the following properties:
\begin{enumerate}[(M1)]
  \item\label{M1} \emph{equivariance:} 
    For each $g\in G$ we have $\P\circ\tau_g^{-1}=\P$.
  \item\label{M2} \emph{existence of densities:}
    There is a $\sigma$-finite measure $\mu_0$ on~$(\cA,\cB(\cA))$,
    such that for each $\Lambda\in\cF$ the measure
    $\P_\Lambda:=\P\circ\Pi_{\Lambda}^{-1}$
    is absolutely continuous with respect to
    $\mu_\Lambda:=\Tensor_{g\in\Lambda}\mu_0$ on~$\Omega_\Lambda$.
    We denote the corresponding probability density function by~$\rho_\Lambda$.
  \item\label{M3} \emph{independence condition:}
    There exists $r\geq 0$ such that for all $n\in\N$
    and non-empty $\Lambda_1,\dots,\Lambda_n\in\cF$
    with $\min\{d(\Lambda_i,\Lambda_j)\mid i\neq j\}> r$
    we have $\rho_{\Lambda} =\prod_{j=1}^n \rho_{\Lambda_j}$,
    where $\Lambda=\bigcup_{j=1}^n \Lambda_j$.
\end{enumerate}
The measure~$\P_\Lambda$ is called the \emph{marginal measure} of~$\P$.
It is defined on $(\Omega_\Lambda,\cB(\Omega_\Lambda))$,
where again $\cB(\Omega_\Lambda)$
is generated by the corresponding cylinder sets.

\begin{Remark}
(a) \ The constant $r\ge0$ in~\ref{M3}
    can be interpreted as the correlation length.
    In particular, if $r=0$ this property implies
    that the colors of the vertices are chosen independently.\\
(b) \ \ref{M2} is trivially satisfied, if~$\P$ is a product measure.
\end{Remark}

In the following, we consider partial orderings on~$\Omega$ and on~$\R^k$,
respectively.
Here we write $\omega\le\omega'$ for $\omega,\omega'\in\Omega$,
if for all $g\in G$ we have $\omega_g\le\omega'_g$.
The notion $x\leq x'$ for $x,x'\in\R^k$ is defined in the same way.
We consider the Banach space
\begin{equation*}
  \B:=\{F\from\R\to\R\mid\text{$F$ right-continuous and bounded}\}\text,
\end{equation*}
which is equipped with supremum norm~$\norm{\argmt}:=\norm[\infty]{\argmt}$.

\begin{Definition}\label{def:admissible}
  A field $f \from\cF\times\Omega\to\B$ is called \emph{admissible}
  if the following conditions are satisfied
  \begin{enumerate}[({A}1)]
  \item\label{transitive}%
    equivariance: 
    for $\Lambda\in\cF$, $g\in G$ and $\omega\in\Omega$ we have
    \begin{equation*}
      f(\Lambda g,\omega)=f(\Lambda,\tau_g\omega)\text.
    \end{equation*}
  \item\label{local}%
    locality: for all $\Lambda\in\cF$ and $\omega,\omega'\in\Omega$
    satisfying $\Pi_\Lambda(\omega)=\Pi_\Lambda(\omega')$ we have
    \begin{equation*}
      f(\Lambda,\omega)=f(\Lambda,\omega')\text.
    \end{equation*}
   \item\label{extensive}%
    almost additivity:
    for arbitrary $\omega\in\Omega$,
    pairwise disjoint $\Lambda_1,\dots,\Lambda_n\in\cF$ and
    $\Lambda:=\bigcup_{i=1}^n\Lambda_i$ we have
    \[
      \Bignorm{f(\Lambda,\omega)-\sum_{i=1}^nf(\Lambda_i,\omega)}
        \le\sum_{i=1}^nb(\Lambda_i)\text,
    \]
    where $b\from\cF\to\Ico0\infty$ satisfies
    \begin{itemize}
      \item $b(\Lambda)=b(\Lambda g)$
        for arbitrary $\Lambda\in\cF$ and $g\in G$,
      \item $\exists\constb>0$ with $b(\Lambda)\le\constb\setsize\Lambda$
        for arbitrary $\Lambda\in\cF$,
      \item $\lim_{i\to\infty}b(\Lambda_i)/\setsize{\Lambda_i}=0$,
        if $(\Lambda_i)_{i\in\N}$ is a F{\o}lner sequence.
      \item for $\Lambda,\Lambda'\in\cF$ we have
        $b(\Lambda\cup \Lambda')\le b(\Lambda)+b(\Lambda')$,
        $b(\Lambda\cap \Lambda')\le b(\Lambda)+b(\Lambda')$, and
        $b(\Lambda\setminus\Lambda')\le b(\Lambda)+b(\Lambda')$.
    \end{itemize}
  \item\label{monotone}%
    monotonicity:~$f$ is antitone with respect to the partial orderings on
    $\Omega\subseteq\R^{G}$ and~$\B$,
    i.\,e.\ if $\omega,\omega'\in \Omega$ satisfy $\omega\le\omega'$, we have
    \[
      f(\Lambda,\omega)(x)\geq  f(\Lambda,\omega')(x)
      \qtext{for all $x\in\R$ and $\Lambda\in\cF$.}
    \]
  \item\label{bounded}
    boundedness:
    \begin{equation*}
      \sup_{\omega\in\Omega}\norm{f(\{\id\},\omega)}
      <\infty\text.
    \end{equation*}
  \end{enumerate}
\end{Definition}

\begin{Remark}\label{rem:admissible}
  \begin{itemize}
  \item Locality \ref{local} can be formulated as follows:
    $f(\Lambda,\argmt)$ is $\sigma(\Pi_\Lambda)$-measurable.
    This enables us to define $f_\Lambda\from\Omega_\Lambda\to\B$ by
    $f_\Lambda(\omega_\Lambda):=f(\Lambda,\omega)$ with $\Lambda\in\cF$
    and $\omega\in\Omega$.
  \item We call the function~$b$ in \ref{extensive} \emph{boundary term}.
    Note that the fourth assumption on~$b$ in~\ref{extensive}
    was not made in \cite{SchumacherSV-16}.
    Indeed, this inequality is used to separate overlapping tiles
    and is unnecessary as soon as the group has the tiling property~\labelcref{tiling}.
    This fourth point is used only in \cref{la:quasialmostadd,lemma:insertindependence}.
  \item The antitonicity assumption in \ref{monotone} can be weakend.
    In particular, our proofs apply to fields
    which are monotone in each coordinate, where the direction
    of the monotonicity can be different for distinct coordinates.
    For simplicity reasons and as our main example (see \cite{SchumacherSV-16})
    satisfies \ref{monotone},
    we restrict ourselves to this kind of monotonicity.
  \item As shown in \cite{SchumacherSV-16}, a combination of
    \labelcref{transitive,extensive,bounded}
    implies that the bound
    \begin{align}\label{eq:bound-K_f}
      \constf:=\sup\{\norm{f(\Lambda,\omega)}/\setsize\Lambda
      \mid\omega\in\Omega,\Lambda\in\cF\}
      \le\constb+\sup_{\omega\in\Omega}\norm{f(\{\id\},\omega)}<\infty\text.
    \end{align}
  \end{itemize}
\end{Remark}

\begin{Definition}
  A set~$\cU$ of admissible fields is called \emph{admissible set},
  if their bound is uniform:
  \begin{equation*}
    \constf[\cU]:=\sup_{f\in\cU}\constf<\infty
  \end{equation*}
  and each for each $f\in\cU$ condition \ref{extensive} is satisfied with the same boundary term~$b$.
  In this situation we denote the constant in \ref{extensive} by $D_{\cU}$.
\end{Definition}

Let us state the main theorem of this paper.
\begin{Theorem}\label{thm:main}
  Let~$G$ be a finitely generated amenable group with a F{\o}lner sequence  $(\Lambda_n)$.
  Further, let~$\cA\in \cB(\R)$ and $(\Omega=\cA^G,\cB(\Omega),\P)$
  a probability space such that~$\P$ satisfies \labelcref{M1,M2,M3}.
  Finally, let~$\cU$ be an admissible set.
  \par
  \begin{enumerate}[(a), wide]
  \item Then, there exists an event $\tilde\Omega\in\cB(\Omega)$
    such that $\P(\tilde\Omega)=1$ and for any $f\in\cU$
    there exists a function $f^*\in\B$, which does not depend on the specific F{\o}lner seqeunce $(\Lambda_n)$, with
    \begin{equation*}
      \forall\omega\in\tilde\Omega\colon\quad
      \lim_{n\to\infty}
      \biggnorm{\frac{f(\Lambda_n,\omega)}{\setsize{\Lambda_n}}-f^*}
      =0
      \text.
    \end{equation*}
  \item Furthermore, for each $\ve\in\Ioo0{1/10}$,
    there exist $j_0(\ve)\in\N$, independent of~$\constf[\cU]$,
    and $a(\ve,\constf[\cU]),b(\ve,\constf[\cU])>0$,
    such that for all $j\in\N$, $j\ge j_0(\ve)$,
    there is an event $\Omega_{j,\ve,\constf[\cU]}\in\Borel(\Omega)$,
    with the properties
    \begin{equation*}
      \P(\Omega_{j,\ve,\constf[\cU]})
      \ge1-b(\ve,\constf[\cU])
      \exp\bigl(-a(\ve,\constf[\cU])\setsize{\Lambda_j}\bigr)
    \end{equation*}
    and
    \begin{align*}
      \biggnorm{\frac{f(\Lambda_j,\omega)}{\setsize{\Lambda_j}}-f^*}&
      \le(37\constf[\cU]+47\constb[\cU]+47)\sqrt\ve
      \qtext{ for all $\omega\in\Omega_{j,\ve,\constf[\cU]}$ and all $f\in\cU$.}
    \end{align*}
  \end{enumerate}
\end{Theorem}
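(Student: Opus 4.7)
Part (a) follows by Borel--Cantelli. For any fixed $\ve\in\Ioo0{1/10}$, since $\setsize{\Lambda_j}\to\infty$, the series $\sum_{j\ge j_0(\ve)} b(\ve,\constf[\cU])\exp(-a(\ve,\constf[\cU])\setsize{\Lambda_j})$ converges, so $\tilde\Omega_\ve:=\liminf_j \Omega_{j,\ve,\constf[\cU]}$ has full $\P$-measure. On $\tilde\Omega_\ve$ the estimate in (b) yields $\limsup_j \Bignorm{f(\Lambda_j,\omega)/\setsize{\Lambda_j} - f^*}\le (37\constf[\cU]+47\constb[\cU]+47)\sqrt\ve$. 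Intersecting over $\ve_k:=1/k^2$ produces a full-measure event $\tilde\Omega$ on which this $\limsup$ is~$0$ for every $k$, so the limit is~$0$. Because $f^*$ will be characterised intrinsically --- as a Cauchy limit of expectations over F{\o}lner tiles --- it depends only on $f$ and $\P$, not on the specific sequence $(\Lambda_n)$.

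\textbf{Geometric decomposition (\cref{secboundary}).} Fix $\ve\in\Ioo0{1/10}$. The Ornstein--Weiss $\ve$-quasi-tiling theorem \cite{OrnsteinW-87,PogorzelskiS-16}, recalled in \cref{sec:quasitilings}, selects finitely many F{\o}lner templates $T_1,\dots,T_N$ with $N=N(\ve)$ and, for every sufficiently large $j$, grid sets $C_i=C_i^{(j)}\subseteq G$ so that the translates $\{T_ic:c\in C_i,\,1\le i\le N\}$ are $\ve$-disjoint, lie in $\Lambda_j$, and cover $\Lambda_j$ up to a remainder of relative size $O(\ve)$. Using equivariance \labelcref{transitive} to rewrite $f(T_ic,\omega)=f(T_i,\tau_c\omega)$, almost additivity \labelcref{extensive} --- crucially the fourth, overlap-friendly property of the boundary term $b$ --- together with \labelcref{bounded} and \labelcref{eq:bound-K_f} delivers the geometric approximation
\begin{equation*}
  \Biggnorm{\frac{f(\Lambda_j,\omega)}{\setsize{\Lambda_j}} - \sum_{i=1}^N\frac{1}{\setsize{\Lambda_j}}\sum_{c\in C_i} f(T_i,\tau_c\omega)} \le C_1\,(\constf[\cU]+\constb[\cU])\,\ve,
\end{equation*}
which is $O(\sqrt\ve)$ after absorbing $\ve\le\sqrt\ve$.

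\textbf{Concentration and Cauchy closing (\cref{sec:GC,sec:cauchy}).} For each template $T_i$, the family $\{f(T_i,\tau_c\omega):c\in C_i\}$ is identically distributed by \labelcref{M1} and locality \labelcref{local}. If the translates $T_ic$ were $r$-separated, independence would follow from \labelcref{M3}; since $\ve$-disjointness only controls the overlap to size $O(\ve)$, I would first invoke the resampling lemma of the appendix to couple $\omega$ with a genuinely independent family on an event of probability at least $1-b(\ve)\exp(-a(\ve)\setsize{\Lambda_j})$. Multivariate Glivenko--Cantelli (exploiting antitonicity \labelcref{monotone} of $f$ in $\omega$ to turn uniform control in $x\in\R$ into finitely many Chernoff bounds on sub-level events) then gives
\begin{equation*}
  \Biggnorm{\frac{1}{\setsize{C_i}}\sum_{c\in C_i}\frac{f(T_i,\tau_c\omega)}{\setsize{T_i}} - F_i^*} \le C_2(\constf[\cU])\sqrt\ve, \qquad F_i^*:=\frac{\E[f(T_i,\argmt)]}{\setsize{T_i}}\in\B.
\end{equation*}
A geometric Cauchy argument closes the proof: comparing $F_T^*$ and $F_{T'}^*$ for two large templates by $\ve$-quasi-tiling $T'$ with translates of $T$ and taking expectations in almost additivity shows that $(F_T^*)_T$ is Cauchy in $\B$, with common limit $f^*$ satisfying $\norm{F_i^*-f^*}\le C_3(\constb[\cU]+1)\sqrt\ve$ for each template $T_i$. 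Combining the three contributions by the triangle inequality and bookkeeping constants yields the claimed $(37\constf[\cU]+47\constb[\cU]+47)\sqrt\ve$ bound on the event $\Omega_{j,\ve,\constf[\cU]}$.

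\textbf{Main obstacle.} The decisive difficulty, which does not arise in the $\Z^d$-setting of \cite{SchumacherSV-16}, is the loss of independence caused by overlapping quasi-tiles: the tile-restricted random variables share coordinates, so standard concentration for empirical processes does not apply directly. Managing this via the resampling lemma --- while keeping the exceptional event exponentially small in $\setsize{\Lambda_j}$ --- is what forces the use of \labelcref{M2,M3} and balances the quasi-tiling parameter against the Glivenko--Cantelli Chernoff exponent, producing the final $\sqrt\ve$ rate via the choice $\ve = \delta^2$.
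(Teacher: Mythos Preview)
Your outline is the paper's strategy: quasi-tile $\Lambda_j$, use almost additivity to pass to tile-wise empirical averages, resample to recover independence, apply multivariate Glivenko--Cantelli, and close with a Cauchy argument in~$\B$; the Borel--Cantelli deduction of~(a) from~(b) is also what the paper does. Two points in your account are off, however, and the first one matters for understanding where the $\sqrt\ve$ actually comes from.

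The $\sqrt\ve$ rate is \emph{not} produced by balancing a Glivenko--Cantelli exponent against the geometry, nor by the substitution $\ve=\delta^2$. Its dominant source is the weighted boundary-term sum $\sum_{i=1}^{N(\ve)}\eta_i(\ve)\,b(K_i)/\setsize{K_i}$ that appears in the geometric step: since $N(\ve)\asymp\ve^{-1}\ln(1/\ve)$ tiles are in play and each density satisfies $\eta_i(\ve)\le\ve$, one controls this sum by splitting at $i^*\approx\ve^{-1/2}$ (\cref{la:etaN}\ref{etaNc}), giving the quantity $\beta(\ve)$ of~\eqref{eq:def-beta}. This is only $O(\sqrt\ve)$ after passing to a subsequence of the template F{\o}lner sets so that $\beta'_n\le(2n)^{-1}$, see \cref{rem:beta_le_ve}. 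Your stated geometric error $C_1(\constf[\cU]+\constb[\cU])\ve$ is therefore too optimistic; the actual bounds in \cref{la:quasi-first-step} and \cref{la:cauchy} each carry a $\beta(\ve)$ term, and these account for the additive ``$+46$'' in the constant. The Glivenko--Cantelli step itself yields an arbitrary $\kappa>0$, and the choice $\kappa:=\sqrt\ve$ only contributes the final ``$+1$''.

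Second, the resampling lemma (\cref{thm:resampling}) is an almost-sure coupling on an enlarged space; it carries no exceptional event. The exponential probability bound comes entirely from Glivenko--Cantelli applied to the resampled, now genuinely independent, variables. Relatedly, before resampling the paper first shrinks each tile $K_i$ to its $r$-interior $K_i^r$ (this is the purpose of $f_i^r$ and $L^{r,\omega}_{i,j}$ in \cref{la:quasi-first-step}), so that the non-overlapping cores $U^{i,j,t}$ of~\eqref{eq:Uijt} are $r$-separated and hence $\P$-independent by~\ref{M3}; resampling then only has to repair the $\ve$-overlaps, at a cost bounded in \cref{lemma:insertindependence}. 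You allude to the correlation length~$r$ but do not build this two-stage reduction into your sketch.
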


For examples of measures~$\P$ satisfying \labelcref{M1,M2,M3}
and of admissible fields, we refer to \cite{SchumacherSV-16}.
The generalization of the geometry from the lattice $\Z^d$ to an  amenable group~$G$
does not affect the examples.
See also \cite{Veselic2005,LenzVeselic2009} for a discussion
of models giving rise to a discontinuous integrated density of states,
which nevertheless can be uniformly approximated by almost additive fields.

\section{Outline of $\epsilon$-quasi tilings}\label{sec:quasitilings}

Let us give a brief introduction to the theory of $\epsilon$-quasi tilings.
The main ideas go back to Ornstein and Weiss in \cite{OrnsteinW-87}.
However the specific results we use here are taken from \cite{PogorzelskiS-16},
see also \cite{diss-fabian}.

Let $(Q_n)$ be a F{\o}lner sequence.
This sequence is called \emph{nested}, if for all $n\in\N$
we have $\{\id\}\subseteq Q_n\subseteq Q_{n+1}$.
Using tranlations and subsequences it is easy to show
that every amenable group contains a nested F{\o}lner sequence,
c.\,f.\ \cite[Lemma 2.6]{PogorzelskiS-16}.

We will use the elements of the nested F{\o}lner sequence~$(Q_n)$ to
$\epsilon$-quasi tile elements of a given F{\o}lner sequence~$(\Lambda_j)$
for (very) large index~$j$.
The next definition provides the notion of an $\alpha$-covering,
$\epsilon$-disjointness, and $\epsilon$-quasi tiling.

\begin{Definition}\label{def:qt}
  Let~$G$ be a finitely generated group,
	$\alpha,\ve\in\Ioo01$, and~$I$ some index set.
	\begin{itemize}
	\item The sets $Q_i\in\cF$, $i\in I$, are said to \emph{$\alpha$-cover}
		the set~$\Lambda\in\cF$, if
		\begin{enumerate}[(i),noitemsep]
		\item $\Union_{i\in I}Q_i\subseteq\Lambda$, and
		\item $\setsize{\Lambda\isect\Union_{i\in I}Q_i}
						\ge\alpha\setsize{\Lambda}$.
		\end{enumerate}
	\item The sets $Q_i\in\cF$, $i\in I$, are called \emph{$\epsilon$-disjoint},
		if there are subsets $\mathring{Q_i}\subseteq Q_i$, $i\in I$,
		such that for all $i\in I$ we have
		\begin{enumerate}[(i),noitemsep]
		\item $\setsize{Q_i\setminus\mathring Q_i}\le\epsilon\setsize{Q_i}$, and
    \item $\mathring Q_i$ and $\Union_{j\in I\setminus\{i\}}\mathring Q_j$
      are disjoint.
		\end{enumerate}
	\item The $K_i\in\cF$, $i\in I$, are said to
		\emph{$\ve$-quasi tile~$\Lambda\in\cF$}, if
		there exist $T_i\in\cF$, $i\in I$, such that
		\begin{enumerate}[(i),noitemsep]
		\item the elements of $\{K_iT_i\mid i\in I\}$ are pairwise disjoint,
		\item for each $i\in I$,
			the elements of $\{K_it\mid t\in T_i\}$ are $\ve$-disjoint, and
		\item\label{qt:2epscover}
			the family $\{K_iT_i\mid i\in I\}$ $(1-2\ve)$-covers~$\Lambda$.
		\end{enumerate}
		The set~$T_i$ is called \emph{center set} for the \emph{tile~$K_i$},
		$i\in I$.
	\end{itemize}
\end{Definition}

Actually, the details in this definition are adapted to our needs in this paper,
as is the following theorem.
The general and more technical versions
as well as the proof of can be found \cite{PogorzelskiS-16}.
See also \cite{OrnsteinW-87} for earlier results.

Roughly speaking, the following theorem provides,
in the setting of finitely generated amenable groups,
$\ve$-quasi covers for every set
with small enough boundary compared to its volume.
Additionally, the theorem also provides control on the fraction
covered by different tiles with uniform almost densities.
To quantify these densities, we use the standard notation
$\ceil b:=\inf\{z\in\Z\mid z\ge b\}=\inf\Z\isect\Ico b\infty$
for the smallest integer above $b\in\R$ and define,
for all $\ve>0$ and $i\in\N$,
\begin{equation}\label{eq:def:N,eta}
	N(\epsilon):=\biggceil{\frac{\ln(\epsilon)}{\ln(1-\epsilon)}}
	\qtextq{and}
	\eta_i(\epsilon):=\epsilon(1-\epsilon)^{N(\epsilon)-i}
	\text.
\end{equation}

\begin{Theorem}\label{thm:STP}
  Let~$G$ be a finitely generated amenable group,
	$(Q_n)$ a nested F{\o}lner sequence, and $\epsilon\in\Ioo0{1/10}$.
  Then there is a finite and strictly increasing selection of sets
  $K_i\in\{Q_n\mid n\in\N\}$, $i\in\{1,\dotsc,N(\epsilon)\}$, 
	with the following quasi tiling property.
  For each F{\o}lner sequence~$(\Lambda_j)$,
  there exists $j_0(\epsilon)\in\N$
  such that for all $j\geq j_0(\epsilon)$,
  the sets $K_i$, $i\in\{1,\dotsc,N(\epsilon)\}$,
  are an $\epsilon$-quasi tiling of~$\Lambda_j$.
	Moreover, for all $j\ge j_0(\ve)$ and all $i\in\{1,\dots,N(\epsilon)\}$,
	the proportion of~$\Lambda_j$ covered by the tile~$K_i$ satisfies
	\begin{equation}\label{eq:etdensity}
		\biggabs{\frac{\setsize{K_iT_i^j}}{\setsize{\Lambda_j}}-\eta_i(\ve)}
		\le\frac{\ve^2}{N(\ve)}
		\text,
	\end{equation}
	where~$T_i^j$ denotes the center set of the tile~$K_i$
	for the $\ve$-quasi cover of~$\Lambda_j$.
\end{Theorem}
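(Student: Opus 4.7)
The plan is to follow the Ornstein--Weiss quasi-tiling argument with the density-tracking refinement. First I would set $N=N(\ve)$ so that $(1-\ve)^N\le\ve$, and then select tiles $K_1\subsetneq K_2\subsetneq\dots\subsetneq K_N$ from $(Q_n)$ in a specific order of importance: pick $K_N$ first, then $K_{N-1}$, and so on, each chosen much more invariant than all previously selected ones. Concretely, $K_i$ should be chosen so that $\setsize{\bdry[r_i]K_i}/\setsize{K_i}$ is tiny, where $r_i$ depends on $\diam(K_{i+1}),\dots,\diam(K_N)$. The nested F{\o}lner property of $(Q_n)$ makes such a selection possible.

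Next, for a given F{\o}lner sequence $(\Lambda_j)$, I would take $j_0(\ve)$ large enough that $\Lambda_j$ is highly invariant under the largest tile $K_N$, and construct the quasi tiling greedily in reverse. First pick a maximal $\ve$-disjoint family of translates $K_NT_N^j\subseteq\Lambda_j$. A Vitali-style argument, combined with maximality and the F{\o}lner property of $\Lambda_j$, shows that any $x\in\Lambda_j$ whose $K_N$-translate is not $\ve$-disjoint from what has already been placed must cause a heavy overlap with packed tiles; pushing this through yields a covered fraction of at least $\ve$. Iterate on the complement with $K_{N-1}$, then $K_{N-2}$, down to $K_1$. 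Since each step covers at least an $\ve$-fraction of what remains, after $N$ steps the uncovered portion is at most $(1-\ve)^N\le\ve$, comfortably within the $2\ve$ slack required by \labelcref{qt:2epscover}.

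For the quantitative density estimate \eqref{eq:etdensity} I would track the exact fractions at each step. Let $U_i$ denote the uncovered part of $\Lambda_j$ after tiles $K_N,\dots,K_{i+1}$ have been placed, with $U_N:=\Lambda_j$. The key is that the greedy step with $K_i$ not only covers \emph{at least} an $\ve$-fraction of $U_i$, but actually covers essentially \emph{exactly} an $\ve$-fraction, up to a boundary slack bounded by $\ve^3/N$; this comes from a two-sided Vitali-type analysis exploiting the very strong invariance of $K_i$ relative to $K_{i+1},\dots,K_N$. Inductively one then obtains
\[
  \biggabs{\frac{\setsize{U_i}}{\setsize{\Lambda_j}}-(1-\ve)^{N-i}}
  \le\frac{\ve^2}{N(\ve)}\text.
\]
Since $\setsize{K_iT_i^j}=\setsize{U_i}-\setsize{U_{i-1}}$, the density of tile $K_i$ is $\ve(1-\ve)^{N-i}=\eta_i(\ve)$ up to the claimed $\ve^2/N(\ve)$ error. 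Pairwise disjointness of $\{K_iT_i^j\}_i$ is built into the removal procedure, and $\ve$-disjointness of the translates within each $T_i^j$ is maintained at every greedy step by construction.

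The main obstacle is the Vitali-type covering lemma at each greedy step, and in particular its \emph{two-sided} version needed for the density estimate: one must show that the uncovered set $U_i$, which is not itself F{\o}lner in any clean sense, nevertheless admits an $\ve$-disjoint $K_i$-packing whose density is close to $\ve$ from both sides. This forces the nested invariance of the~$K_i$'s to be quantitatively chosen so that boundary losses accumulated across all $N(\ve)$ iterations remain below $\ve^2/N(\ve)$, which is the reason $j_0(\ve)$ has to depend on~$\ve$ both through $N(\ve)$ and through the invariance threshold of~$K_N$.
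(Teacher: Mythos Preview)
First, note that the paper does not actually prove this theorem: it is imported from \cite{PogorzelskiS-16} (with roots in \cite{OrnsteinW-87}), and the text immediately preceding the statement explicitly defers the proof there. Your outline of the basic Ornstein--Weiss mechanism---selecting nested tiles $K_1\subsetneq\dots\subsetneq K_N$ with rapidly improving invariance, then packing greedily from the largest tile down so that each step removes at least an $\ve$-fraction of the remaining uncovered set---is correct and is how the $(1-2\ve)$-covering property is obtained.

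The gap is in your argument for the density estimate~\eqref{eq:etdensity}. A \emph{maximal} $\ve$-disjoint packing of~$U_i$ by translates of~$K_i$ can cover far more than an $\ve$-fraction of~$U_i$: take $G=\Z$, $K_i=\{0,\dots,m-1\}$, and $U_i=\{0,\dots,n-1\}$ with $n\gg m$; a maximal disjoint (hence $\ve$-disjoint) family covers essentially all of~$U_i$, not an $\ve$-fraction. No amount of invariance of~$K_i$ relative to the larger tiles changes this, so there is no ``two-sided Vitali-type analysis'' that pins the covered fraction near~$\ve$ from above. The fix is not to take a maximal packing at all: one adds translates of~$K_i$ into~$U_i$ one at a time, maintaining $\ve$-disjointness, and simply \emph{stops} once the covered proportion of~$\Lambda_j$ reaches the target~$\eta_i(\ve)$. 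The Ornstein--Weiss lower bound guarantees one can reach this target before the process stalls, and the granularity error from the last tile added is at most $\setsize{K_i}/\setsize{\Lambda_j}$, which is made smaller than $\ve^2/N(\ve)$ by taking~$j_0(\ve)$ large enough. This is also why~$j_0(\ve)$ must depend on the tile sizes $\setsize{K_1},\dots,\setsize{K_{N(\ve)}}$ and on the required precision~$\ve^2/N(\ve)$, not merely on the invariance threshold needed for~$K_N$.
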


To make full use of \cref{thm:STP},
we need some properties of the densities~$\eta_i(\ve)$.
\begin{Lemma}\label{la:etaN}
  For $N(\epsilon)$ and $\eta_i(\epsilon)$ as in \eqref{eq:def:N,eta},
	the following holds true.
  \begin{enumerate}[(a)]
  \item\label{etaNa} For each $\epsilon\in\Ioo01$ we have
    \begin{equation*}
      1-\epsilon
      \le\sum_{i=1}^{N(\epsilon)}\eta_i(\epsilon)
      =1-(1-\epsilon)^{N(\epsilon)}
      \le1\text.
    \end{equation*}
  \item\label{etaNb}
    For each $\epsilon\in\Ioo0{1/10}$ and $i\in\{1,\dotsc,N(\epsilon)\}$, we have
    \begin{equation*}
      \frac{\ve}{N(\ve)}
      \le\eta_i(\epsilon)
      \le\epsilon
      \text.
    \end{equation*}
  \item\label{etaNc} For a bounded sequence $(\alpha_i)_{i \in \N}$
    and $\epsilon\in\Ioo0{1/10}$ we have the inequality
    \begin{align*}
      \biggabs{\sum_{i=1}^{N(\epsilon)}\alpha_i\eta_i(\epsilon)}
      \leq A\sqrt{\epsilon} + A_\epsilon,
    \end{align*}
    where $A:=\sup\{\abs{\alpha_i}\mid i\in\N\}$ and $A_\epsilon:=\sup\{\abs{\alpha_i}\mid i\in\N,\ i\geq \epsilon^{-1/2}\}$.
    In particular,
    \begin{align*}
      \lim_{{\epsilon\dnto0}}
      \sum_{i=1}^{N(\epsilon)}\alpha_i\eta_i(\epsilon)
      \le\liminf_{i\to\infty}\abs{\alpha_i}\text.
    \end{align*}
\end{enumerate}
\end{Lemma}

\begin{proof}
  Part~\ref{etaNa} is an easy implication of the sum formula
	for the geometric series.
	We refer to \cite[Remark 4.3]{PogorzelskiS-16} for the details.
	\par
  Let us prove~\ref{etaNb}.
  By definition of $\eta_i(\epsilon)$ we have $\eta_i(\epsilon)\le\epsilon$.
	In order to see the other inequality, we note that
  \begin{align*}
    \eta_i(\epsilon)
    \ge\epsilon(1-\epsilon)^{N(\epsilon)-1}
    \ge\epsilon(1-\epsilon)^{\frac{\ln(\epsilon)}{\ln(1-\epsilon)}}
    =\epsilon^2\text.
  \end{align*}
  Thus, it is sufficient to show that $\epsilon\ge1/N(\epsilon)$.
  To this end, note that by definition of~$N(\epsilon)$
	the following holds true:
  \begin{align*}
    \epsilon N(\epsilon)\geq\frac{\epsilon\ln(\epsilon)}{\ln(1-\epsilon)}\text.
  \end{align*}
  Using the assumption $\epsilon\in\Ioo0{1/10}$,
  a short and elementary calculation shows that
  the last expression is bounded from below by~$1$.
	\par
  To verify part~\ref{etaNc}, set
	$N_\epsilon^*:=\floor{\epsilon^{-1/2}}
		:=\sup\Z\isect\Ioc{-\infty}{\ve^{-1/2}}$,
	and calculate as follows
  \begin{align*}
    \biggabs{\sum_{i=1}^{N(\epsilon)}\alpha_i\eta_i(\epsilon)}
    \le\biggabs{\sum_{i=1}^{N_\epsilon^*}\alpha_i\eta_i(\epsilon)}
			+\biggabs{\sum_{i=N_\epsilon^*+1}^{N(\epsilon)}\alpha_i\eta_i(\epsilon)}
    \le AN_\epsilon^*\epsilon+A_\epsilon
		\le A\sqrt\epsilon+A_\epsilon\text.
  \end{align*}
  Note that it is easy to show that for $0<\epsilon<1/10$
	we have $N(\epsilon)>N_\epsilon^*>0$, such that both sums are non-empty.
\end{proof}

Next, we derive a useful corollary of \cref{thm:STP}.

\begin{Corollary}\label{cor:density}
  Let a finitely generated group~$G$,
  a subset $\Lambda\in\cF$ and $\epsilon\in\Ioo0{1/2}$ be given.
  Assume furthermore that the sets $K_i\in\cF$, $i\in\{1,\dotsc,N(\epsilon)\}$,
  are an $\epsilon$-quasi tiling of~$\Lambda$
  with almost densities~$\eta_i(\epsilon)$ and center sets $T_i\in\cF$,
  $i\in\{1,\dotsc,N(\epsilon)\}$, satisfying~\eqref{eq:etdensity}.
  Then we have
		for each $i\in\{1,\dots,N(\epsilon)\}$,
    the inequality estimating the ``density'' of the tile~$K_i$:
    \begin{align*}
      \biggabs{\frac{\setsize{T_i}}{\setsize{\Lambda}}
        -\frac{\eta_i(\epsilon)}{\setsize{K_i}}}
      \le4\ve\frac{\eta_i(\epsilon)}{\setsize{K_i}}\text.
    \end{align*}
\end{Corollary}

\begin{proof}
  We fix $i\in\{1,\dotsc,N(\epsilon)\}$,
  employ $\epsilon$-disjointness and the density estimate~\eqref{eq:etdensity},
  and deduce
  \begin{align*}
    (1-\epsilon)\frac{\setsize{K_i}\setsize{T_i}}{\setsize{\Lambda}}
    \le\frac{\setsize{K_i T_i}}{\setsize{\Lambda}}
    \le\eta_i(\epsilon) +\frac{\epsilon^2}{N(\epsilon)}\text.
  \end{align*}
  Therefore, with part~\ref{etaNb} of \cref{la:etaN}, we get
  \begin{align*}
    \frac{\setsize{T_i}}{\setsize{\Lambda}}
			-\frac{\eta_i(\epsilon)}{\setsize{K_i}}
    \le\frac{\eta_i(\epsilon)+\frac{\epsilon^2}{N(\epsilon)}}
            {(1-\epsilon)\setsize{K_i}}
      -\frac{\eta_i(\epsilon)}{\setsize{K_i}}
    =\frac{\epsilon\eta_i(\epsilon)+\frac{\epsilon^2}{N(\epsilon)}}
          {(1-\epsilon)\setsize{K_i}}
    \le\frac{2\epsilon\eta_i(\epsilon)}
						{(1-\epsilon)\setsize{K_i}}
    \le\frac{4\epsilon\eta_i(\epsilon)}{\setsize{K_i}}\text.
  \end{align*}
  \Cref{eq:etdensity} gives also a bound for the other direction.
  To be precise, we use
  \begin{equation}\label{eq:Tbound}
    \eta_i(\epsilon)-\frac{\epsilon^2}{N(\epsilon)}
		\le\frac{\setsize{K_iT_i}}{\setsize{\Lambda}}
		\le\frac{\setsize{K_i}\setsize{T_i}}{\setsize{\Lambda}}
  \end{equation}
  and again part~\ref{etaNb} of \cref{la:etaN} to obtain
  \begin{align*}
    \frac{\setsize{T_i}}{\setsize{\Lambda}}
			-\frac{\eta_i(\epsilon)}{\setsize{K_i}}
		\geq\frac{\eta_i(\epsilon)-\frac{\epsilon^2}{N(\epsilon)}}{\setsize{K_i}}
			-\frac{\eta_i(\epsilon)}{\setsize{K_i}}
    =-\frac{\epsilon^2 }{N(\epsilon)\setsize{K_i}}
    \ge-\frac{\epsilon\eta_i(\epsilon) }{\setsize{K_i}}\text.
  \end{align*}
  This 
	implies the claimed bound.
\end{proof}

Finally, we provide a generalization of almost additivity
for sets which are not disjoint, but only $\epsilon$-disjoint.
The proof can be found in \cite[Lemma~5.23]{diss-fabian}.

\begin{Lemma}\label{la:quasialmostadd}
  Let~$G$ be a finitely generated group,
	$f$ an admissible field with boundary term~$b$,
	and $\epsilon\in\Ioo0{1/2}$.
  Then for any $\epsilon$-disjoint sets $Q_i$, $i\in\{1,\dots,k\}$,
  we have for each $\omega\in\Omega$:
  \begin{equation*}
    \biggnorm{f(Q,\omega)-\sum_{i=1}^kf(Q_i,\omega)}
    \le\epsilon(3\constf+9\constb)\setsize{Q}+3\sum_{i=1}^kb(Q_i)\text,
  \end{equation*}
  where $Q:=\bigcup_{i=1}^kQ_i$ and~$\constb$
  is the constant from~\ref{extensive} of Definition~\ref{def:admissible}.
\end{Lemma}

\section{Approximation via the empirical measure}\label{secboundary}

Given some F{\o}lner sequence~$(\Lambda_j)$ and an admissible field~$f$,
the aim of this section is the approximation of the expression
$\frac{f(\Lambda_j,\omega)}{\setsize{\Lambda_j}}$
using elements of a second F{\o}lner sequence~$(Q_n)$
and associated empirical measures, cf.~Lemma \ref{la:quasi-first-step}.
This second sequence needs to satisfy certain additional assumptions,
namely we need that $(Q_n)$ is nested and satisfies for the
correlation length $r\in\N_0$ from~\ref{M3} that the sequences
\begin{align}\label{eq:monotone0}
  \biggl(\frac{b(Q_n)}{\setsize{Q_n}}\biggr)\textq,
  \biggl(\frac{b(Q_n^r)}{\setsize{Q_n}}\biggr)\qtextq{and}
  \biggl(\frac{\setsize{\partial^r(Q_n)}}{\setsize{Q_n}}\biggr)
  \qtext{converge \emph{monotonically} to~$0$.}
\end{align}
That these sequences converge to zero is clear by the fact that~$(Q_n)$
is a F{\o}lner sequence and~$b$
a boundary term in the sense of \cref{def:admissible}.
In order to obtain the monotonicity in~\eqref{eq:monotone0},
we choose a subsequence of~$(Q_n)$.
These considerations show that each amenable group admits
a nested F{\o}lner sequence~$(Q_n)$ which satisfies~$\eqref{eq:monotone0}$.
These terms will be used in the error estimates
in the approximations throughout this text.
To abbreviate the notation, we define
\begin{equation}\label{eq:def-beta}
  \beta'_n:=\max\biggl\{\frac{b(Q_n)}{\setsize{Q_n}},
  \frac{b(Q_n^r)}{\setsize{Q_n}},
  \frac{\setsize{\partial^r(Q_n)}}{\setsize{Q_n}}\biggr\}
  \qtextq{and}
  \beta(\ve):=\beta'_1\sqrt\ve+\beta'_{\ceil{1/\sqrt\ve}}
\end{equation}
for $n\in\N$ and $\ve\in\Ioo0{1/10}$.
Note that $(\beta'_n)_n$ is a monotone sequence and converges to~$0$,
and that by \cref{la:etaN}\ref{etaNc}
\begin{equation}\label{eq:beta}
  \sum_{i=1}^{N(\ve)}\beta'_i\eta_i(\ve)
  \le\beta(\ve)
  \xto{\ve\dnto0}0\text.
\end{equation}
\begin{Remark}\label{rem:beta_le_ve}
  For the proof of \cref{thm:main},
  we additionally have to ensure $\beta'_n\le(2n)^{-1}$
  for all $n\in\N$ while taking the subsequences above.
  We will track the boundary terms throughout the paper and use~$\beta(\ve)$
  until the very end, where we simplify the result by applying
  \begin{equation*}
    \beta(\ve)
    =\beta'_1\sqrt\ve+\beta'_{\ceil{1/\sqrt\ve}}
    \le\frac12\sqrt\ve+\frac1{2\ceil{1/\sqrt\ve}}
    \le\sqrt\ve\text.
  \end{equation*}
  The cost of this additional condition on the boundary terms is that,
  via \cref{thm:STP}, $j_0(\ve)$ in \cref{thm:main} will potentially increase.
  But up to here, we deal only with the geometry of~$G$
  and still have that~$j_0(\ve)$ depends only on~$\ve$.

  Moreover, let us emphasize that when considering an admissible set~$\cU$
  the value~$\sqrt{\ve}$ gives a uniform bound on~$\beta(\epsilon)$ for all $f\in\cU$,
  since in this situation all $f\in\cU$ are almost additive with the same boundary term~$b$.
\end{Remark}

Define for an admissible field~$f$ and $\Lambda\in\cF$ the function
\begin{equation}\label{eq:def:fLambda}
  f_\Lambda\from\Omega_{\Lambda}\to\B\textq,\quad
  f_\Lambda(\omega):=f(\Lambda,\omega')
  \qquad\text{where }
  \omega'\in \Pi_{\Lambda}^{-1}(\{\omega \})\text.
\end{equation}
Note that by~\ref{local} of Definition~\ref{def:admissible}
we see that~$f_\Lambda$ is well-defined (and measurable).
In the situation where we insert elements of the F{\o}lner sequence~$(\Lambda_n)$
or $(\Lambda_n^r)$, for some $r\in\N_0$, we write
\begin{equation}\label{eq:def:f_n}
  f_n:=f_{\Lambda_n} \qquad\text{or}\qquad f_n^r:=f_{\Lambda_n^r}\text.
\end{equation}

For given $K,T\in\cF$ and $\omega\in\Omega$ we define the \emph{empirical measure} by
\begin{equation}
  L^{\omega}(K,T)\from\Borel(\Omega_{KT})\to\Icc01\textq,
  L^{\omega}(K,T)
    =\frac1{\setsize{T}}\sum_{t\in T}\unitmass{(\tau_t \omega)_{K}}\text.
\end{equation}

Given $\epsilon\in\Ioo0{1/10}$  and sequences $(\Lambda_j)$ and $(Q_j)$ as above,
we obtain by Theorem \ref{thm:STP} finite sets $K_i(\epsilon)$, $i=1,\dotsc,N(\epsilon)$
and (for~$j$ large enough) center sets $T_i^j(\ve)$ which form an $\epsilon$-quasi tiling of $\Lambda_j$.
In this setting, we use for given $\omega\in\Omega$, $\epsilon\in\Ioo0{1/10}$,
$r>0$, $i\in\{1,\dotsc,N(\epsilon)\}$ and $j\in \N$ large enough the notation
\begin{equation}\label{def:Lijromega}
  L^{\omega}_{i,j}(\epsilon) := L^{\omega}(K_{i}(\epsilon),T_i^j(\ve)) \quad\text{and}\quad
f_i(\epsilon):=f_{K_i(\epsilon)}
\end{equation}
as well as
\begin{equation}\label{def:Lijromegar}
  L^{r,\omega}_{i,j}(\epsilon) := L^{\omega}(K_{i}^r(\epsilon),T_i^j(\ve)) \quad\text{and}\quad
f_i^r(\epsilon):=f_{K_i^r(\epsilon)}.
\end{equation}
Here, the reader may recall that  $K_{i}^r(\epsilon)=K_i(\epsilon)\setminus \partial^r(K_i(\epsilon))$.

Moreover, we use for $\Lambda\in\cF$, a measurable $f\from\Omega_\Lambda\to\B$ and a measure~$\nu$ on $(\Omega_\Lambda,\cB(\Omega_\Lambda))$ the notation
\begin{equation*}
  \spr f\nu:=\int_{\Omega_{\Lambda}}f(\omega)\dd\nu(\omega)\text.
\end{equation*}

\begin{Lemma}\label{la:empmeasure2}
  Let~$f$ be an admissible field and let $K,T\in\cF$ and $\omega\in\Omega$.
  Then,
  \begin{equation*}
    \spr{f_K}{L^\omega(K,T)}
    =\frac1{\setsize T}\sum_{t\in T}f(Kt,\omega)
    \text.
  \end{equation*}
\end{Lemma}
\begin{proof}
  We calculate using linearity and~\ref{transitive}
  of Definition \ref{def:admissible}
  \begin{align*}
    \spr{f_K}{L^\omega(K,T)}&
    =\int_{\Omega_{K}} f_K(\omega')\d L^\omega(K,T)(\omega')
    =\frac1{\setsize{T}} \sum_{t\in T} \int_{\Omega_{K}} f_K(\omega')\d
      \unitmass{(\tau_t \omega)_{K}}(\omega')\\&
    =\frac1{\setsize{T}} \sum_{t\in T} f_K((\tau_t\omega)_{K})
    =\frac1{\setsize{T}} \sum_{t\in T} f(K t,\omega)\text.\qedhere
  \end{align*}
\end{proof}

We proceed with the first approximation Lemma.

\begin{Lemma}\label{la:quasi-first-step}
  Let $G$ be a finitely generated amenable group,
  let~$f$ be an admissible field and let~$(\Lambda_n)$
  and~$(Q_n)$ be F{\o}lner sequences, where~$(Q_n)$
  is additionally nested and satisfies \eqref{eq:monotone0}.
  Then, we have for all $\omega\in\Omega$ that
  \begin{equation}\label{eq:limit:empmeasure}
    \lim_{\ve\dnto0}\lim_{j\to\infty}
      \biggnorm{\frac{f(\Lambda_j,\omega)}{\setsize{\Lambda_j}}
        -\sum_{i=1}^{N(\epsilon)}\eta_i(\epsilon)
        \frac{\spr{f_{i}^r(\epsilon)}{L^{r,\omega}_{i,j}(\epsilon)}}
        {\setsize{ K_i(\epsilon)}}}
    =0\text.
  \end{equation}
  where $K_i(\epsilon)$, $i\in\{1,\dotsc,N(\epsilon)\}$
  are given by \cref{thm:STP}.
  Moreover, we have for arbitrary $\ve\in\Ioo0{1/10}$ and $j\ge j_0(\ve)$,
  with $j_0(\ve)$ from Theorem~\ref{thm:STP}, the inequality
  \begin{align*}
    \biggnorm{\frac{f(\Lambda_j,\omega)}{\setsize{\Lambda_j}}
      -\sum_{i=1}^{N(\epsilon)}\eta_i(\epsilon)
      \frac{\spr{f_{i}^r(\epsilon)}{L^{r,\omega}_{i,j}(\epsilon)}}
           {\setsize{K_i(\epsilon)}}
    }
    \le(9\constf+15\constb)\epsilon+12(2+\constf+\constb)\beta(\ve)\text.
  \end{align*}
\end{Lemma}
\begin{proof}
	Let $\epsilon\in\Ioo0{1/10}$ and $j\ge j_0(\epsilon)\in\N$ be given,
  where $j_0(\epsilon)$ is the constant given by Theorem~\ref{thm:STP}.
	We estimate using the triangle inequality:
	\begin{align}\label{eq:triangle1}
		\biggnorm{\frac{f(\Lambda_j,\omega)}{\setsize{\Lambda_j}}
      -\sum_{i=1}^{N(\epsilon)}\eta_i(\epsilon)
      \frac{\spr{f_{i}^r(\epsilon)}{L^{r,\omega}_{i,j}(\epsilon)}}
           {\setsize{ K_i(\epsilon)}}
    }
    \le a(\epsilon,j)+\sum_{i=1}^{N(\epsilon)}b_i(\epsilon,j)
      +\sum_{i=1}^{N(\epsilon)}c_i(\epsilon,j),
	\end{align}
	where
	\begin{align*}
    a(\epsilon,j)&
    :=\frac{1}{\setsize{\Lambda_j}}\biggnorm{f(\Lambda_j,\omega)
      -\sum_{i=1}^{N(\ve)}\sum_{t\in T_i^j(\ve)}f(K_i(\ve)t,\omega)}\text,\\
    b_i(\epsilon,j)&
    :=\biggnorm{\sum_{t\in T_i^j(\epsilon)}
      \frac{f(K_i(\epsilon)t,\omega)}{\setsize{\Lambda_j}}
      -\eta_i(\epsilon)\frac{\spr{f_i(\ve)}{L^\omega_{i,j}(\ve)}}
        {\setsize{K_i(\ve)}}}\text{, and}\\
    c_i(\epsilon,j)&
      :=\frac{\eta_i(\epsilon)}{\setsize{K_i(\epsilon)}}
      \bignorm{\spr{f_i(\ve)}{L^{\omega}_{i,j}(\ve)}
        -\spr{f_i^r(\ve)}{L^{r,\omega}_{i,j}(\ve)}}\text.
	\end{align*}
	Here, the expressions $L_{i,j}^\omega(\epsilon)$ and $f_i(\epsilon)$
  are given by \eqref{def:Lijromega}.
	Let us estimate the term $a(\epsilon,j)$.
  To this end, denote the part which is covered by translates of~$K_i(\ve)$,
  $i\in\{1,\dotsc,N(\epsilon)\}$ by
  \begin{equation*}
    R_i^j(\epsilon)
    :=\bigcup_{i=1}^{N(\epsilon)}K_i(\epsilon)T_i^j(\epsilon)
    \subseteq\Lambda_j\text.
  \end{equation*}
	Then we have, using the properties of the
  $\epsilon$-quasi tiling and part~\ref{etaNa} of Lemma~\ref{la:etaN},
  \begin{equation*}
    \setsize{R_i^j(\epsilon)}
    =\sum_{i=1}^{N(\epsilon)}\setsize{K_i(\epsilon)T_i^j(\epsilon)}
    \ge\setsize{\Lambda_j}\sum_{i=1}^{N(\epsilon)}
      \biggl(\eta_i(\epsilon)-\frac{\epsilon^2}{N(\epsilon)}\biggr)
	  \ge(1-2\epsilon)\setsize{\Lambda_j}\text,
  \end{equation*}
	which in turn gives
  $\setsize{\Lambda_j\setminus R_i^j(\ve)}\le2\ve\setsize{\Lambda_j}$.
	We use this and Lemma \ref{la:quasialmostadd} to calculate
	\begin{align*}
	\setsize{\Lambda_j}a(\epsilon,j) &\leq
	 (3\constf+9\constb)\epsilon\setsize{\Lambda_j}+3b(\Lambda_j\setminus R_i^j(\epsilon))  +\norm{f(\Lambda_j\setminus R_i^j(\epsilon))} +3\sum_{i=1}^{N(\epsilon)}\sum_{t\in T_i^j(\epsilon)}b(K_i(\epsilon)t )
	\\ &\leq
		(3\constf+9\constb)\epsilon\setsize{\Lambda_j} + (\constf+ 3 \constb)\setsize{\Lambda_j\setminus R_i^j(\epsilon)}
	+ 3\sum_{i=1}^{N(\epsilon)}\setsize{T_i^j(\epsilon)}b(K_i(\epsilon))
	\\ &\leq
		(5\constf+15\constb)\epsilon\setsize{\Lambda_j}
	+ 3\sum_{i=1}^{N(\epsilon)}\setsize{T_i^j(\epsilon)}b(K_i(\epsilon)).
	\end{align*}
	By $\epsilon$-disjointness and \eqref{eq:etdensity} we obtain
	\begin{align}\label{eq:est(iv)rechts}
		\frac12\setsize{K_i(\epsilon)}\setsize{T_i^j(\epsilon)}
		\le(1-\epsilon)\setsize{K_i(\epsilon)}\setsize{T_i^j(\epsilon)}
		\le\setsize{K_i(\epsilon)T_i^j(\epsilon)}
		\le\biggl(\eta_i(\epsilon)+\frac{\epsilon^2}{N(\epsilon)}\biggr)
			\setsize{\Lambda_j}\text,
	\end{align}
	which together with~\ref{etaNb} of Lemma~\ref{la:etaN} gives
	\begin{align*}
		\sum_{i=1}^{N(\epsilon)}\setsize{T_i^j(\epsilon)}b(K_i(\epsilon))&
		\le2\setsize{\Lambda_j}\sum_{i=1}^{N(\epsilon)}\biggl(\eta_i(\epsilon)
			+\frac{\epsilon^2}{N(\epsilon)}\biggr)
			\frac{b(K_i(\epsilon))}{\setsize{K_i(\epsilon)}}\\&
		\le4\setsize{\Lambda_j}\sum_{i=1}^{N(\epsilon)}
			\!\eta_i(\epsilon)\frac{b(K_i(\epsilon))}{\setsize{K_i(\epsilon)}}\text.
	\end{align*}
	This implies the following bound
	\begin{align}\label{eq:bound:a}
		a(\epsilon,j)
		\le(5\constf+15\constb)\ve+12\sum_{i=1}^{N(\epsilon)}
			\eta_i(\epsilon)\frac{b(K_i(\epsilon))}{\setsize{K_i(\epsilon)}}\text.
	\end{align}
	To estimate the second term in \eqref{eq:triangle1},
	we apply \cref{la:empmeasure2} to obtain
	\begin{align*}
	  \sum_{t\in T_i^j(\epsilon)}f(K_i(\epsilon)t,\omega)
		=\setsize{T_i^j(\epsilon)}\cdot\spr{f_i(\epsilon)}{L_{i,j}^\omega(\epsilon)}
		\text.
	\end{align*}
	Thus, by \cref{cor:density} and the fact
	$\norm{\spr{f_i(\epsilon)}{L_{i,j}^\omega(\epsilon)}}
		\le\constf\setsize{K_i(\epsilon)}$,
	we have for each $i\in\{1,\dots,N(\epsilon)\}$:
	\begin{align}
	  b_i(\epsilon,j)&
		=\biggnorm{\frac{\setsize{T_i^j(\epsilon)}
			\spr{f_i(\epsilon)}{L_{i,j}^\omega(\epsilon)}}{\setsize{\Lambda_j}}
			-\eta_i(\epsilon)\frac{\spr{f_{i}(\epsilon)}{L^{\omega}_{i,j}(\epsilon)}}
				{\setsize{K_i(\epsilon)}}}\nonumber\\&
		=\biggabs{\frac{\setsize{T_i^j(\epsilon)}}{\setsize{\Lambda_j}}
			-\frac{\eta_i(\epsilon)}{\setsize{K_i(\epsilon)}}}
			\norm{\spr{f_i(\epsilon)}{L_{i,j}^\omega(\epsilon)}}\nonumber\\&
		\le4\frac{\epsilon \eta_i(\epsilon)}{\setsize{K_i(\epsilon)}}
			\constf\setsize{K_i(\epsilon)}
		=4\constf\epsilon\eta_i(\epsilon)
		\label{eq:bound:b}\text.
	\end{align}

	Let us finally estimate the term $c_i(\epsilon,j)$.
	By Lemma \ref{la:empmeasure2} we have for each $i\in\{1,\dots,N(\epsilon)\}$
	\begin{align}&
		\bignorm{\spr{f_{i}(\epsilon)}{L^{\omega}_{i,j}(\epsilon)}
			-\spr{f_{i}^r(\epsilon)}{L^{r,\omega}_{i,j}(\epsilon)}}
		\le\frac1{\setsize{T_i^j(\epsilon)}}\sum_{t\in T_i^j(\epsilon)}
			\bignorm{f(K_i(\epsilon)t,\omega)-f(K_i^r(\epsilon)t,\omega)}
		\nonumber\\&
		\le\frac1{\setsize{T_i^j(\epsilon)}}\sum_{t\in T_i^j(\epsilon)}
			b(K_i^r(\epsilon))+b(\partial^r(K_i(\epsilon))\cap K_i(\epsilon))
			+\bignorm{f(\partial^r(K_i(\epsilon))t\cap K_i(\epsilon)t,\omega)}
		\nonumber\\&
		\le b(K_i^r(\epsilon))+(\constf+\constb)\setsize{\partial^r(K_i(\epsilon))}
		\text.\label{eq:bound:c}
	\end{align}
	Together with \eqref{eq:triangle1},
	the estimates for $a(\epsilon,j)$ in~\eqref{eq:bound:a},
	for $b_i(\epsilon,j)$ in~\eqref{eq:bound:b}
	and for $c_i(\epsilon,j)$ in~\eqref{eq:bound:c} yield
	\begin{align*}&
		\biggnorm{\frac{f(\Lambda_j,\omega)}{\setsize{\Lambda_j}}
			-\sum_{i=1}^{N(\epsilon)}\eta_i(\epsilon)
			\frac{\spr{f_{i}^r(\epsilon)}{L^{r,\omega}_{i,j}(\epsilon)}}
			{\setsize{ K_i(\epsilon)}}}\\&\quad
		\le(5\constf+15\constb)\ve+12\sum_{i=1}^{N(\epsilon)}
			\eta_i(\epsilon)\frac{b(K_i(\epsilon))}{\setsize{K_i(\epsilon)}}
		\\&\quad\quad
			+\sum_{i=1}^{N(\epsilon)}\eta_i(\epsilon)\biggl(4\constf\epsilon
			+\frac{b(K_i^r(\epsilon))+(\constf+\constb)\setsize{\partial^r(K_i(\epsilon))}}
						{\setsize{K_i(\epsilon)}}\biggr)\\&\quad
		\le(9\constf+15\constb)\ve+12\sum_{i=1}^{N(\epsilon)}
			\eta_i(\epsilon)\frac{b(K_i(\epsilon))+b(K_i^r(\epsilon))
				+(\constf+\constb)\setsize{\partial^r(K_i(\epsilon))}}
				{\setsize{K_i(\epsilon)}}\text.
	\end{align*}
	To verify~\eqref{eq:limit:empmeasure},
	recall that we assumed that~$(Q_n)$ satisfies \eqref{eq:monotone0}.
	By the choice of~$K_i(\epsilon)$ in \cref{thm:STP}, this gives
	\begin{align*}&
		\biggnorm{\frac{f(\Lambda_j,\omega)}{\setsize{\Lambda_j}}
			-\sum_{i=1}^{N(\epsilon)}\eta_i(\epsilon)
			\frac{\spr{f_{i}^r(\epsilon)}{L^{r,\omega}_{i,j}(\epsilon)}}
           {\setsize{ K_i(\epsilon)}}}\\&\quad
		\le(9\constf+15\constb)\ve+12\sum_{i=1}^{N(\ve)}\eta_i(\ve)
			\underbrace{\frac{b(Q_i)+b(Q_i^r)+(\constf+\constb)\setsize{\partial^r(Q_i)}}
											 {\setsize{Q_i}}}_{\le(2+\constf+\constb)\beta'_i}\\&\quad
		\le(9\constf+15\constb)\ve+12(2+\constf+\constb)\beta(\ve)
		\text.
	\end{align*}
  The last inequality follows from~\eqref{eq:beta}.
	As this bound holds for arbitrary $\epsilon\in\Ioo0{1/10}$ and
	$j\ge j_0(\epsilon)$, this particularly proves~\eqref{eq:limit:empmeasure}.
\end{proof}

\section{Approximation via Glivenko--Cantelli}\label{sec:GC}

In this section we aim to apply a multivariate Glivenko--Cantelli theorem
in order to approximate the empirical measure using the theoretical measure.
Recall that a Glivenko--Cantelli theorem compares the empirical measure
of a normalized sum of independent and identically
distributed random variables with their distribution.
At the end of this section, we will apply the following Glivenko--Cantelli
theorem which was proved in \cite{SchumacherSV-16}
based on results by DeHardt and Wright, see \cite{DeHardt1971,Wright1981}.
Monotone functions on~$\R^k$ were defined in~\ref{monotone}.
\begin{Theorem}\label{dewright}
  Let $(\Omega,\cA,\P)$ be a probability space and $X_t\from\Omega\to\R^k$,
  $t\in\N$, independent and identically distributed random variables
   such that the distribution $\mu:=\P(X\in\argmt)$
  is absolutely continuous with respect to a product measure
  $\Tensor_{\ell=1}^k\mu_\ell$ on~$\R^k$, where~$\mu_\ell$,
  $\ell\in\{1,\dotsc,k\}$, are $\sigma$-finite measures on~$\R$.
  For each $n\in\N$, we denote by
  $L_n^{(\omega)}:=\frac1n\sum_{t=1}^n\dirac{X_t}$
  the empirical distribution of $(X_t)_{t\in\{1,\dotsc,n\}}$.
  Further, fix $M\in\R$ and let
  $\cM:=\{g\from\R^k\to\R\mid\text{$g$ is monotone,
    and $\sup_{x\in\R^k}\abs{g(x)}\le M$}\}$.
  \par
  Then, for all $\kappa>0$, there are $a=a(\kappa,M)>0$ and $b=b(\kappa,M)>0$
  such that for all $n\in\N$, there exists an event $\Omega_{\kappa,n,M}\in\cA$
  with large probability
  $ 
    \P(\Omega_{\kappa,n,M})\ge1-b\exp(-an)\text,
  $ 
  such that for all $\omega\in\Omega_{\kappa,n,M}$, we have
  \begin{equation*}
    \sup_{g\in\cM}\abs{\spr g{L_n^{(\omega)}-\mu}}\le\kappa\text.
  \end{equation*}
  In particular, there exists a set $\Omega_0\in\cA$ with $\P(\Omega_0)=1$
  and $\sup_{g\in\cM}\abs{\spr g{L_n^{(\omega)}-\mu}}\xto{n\to\infty}0$
  for all $\omega\in\Omega_0$.
\end{Theorem}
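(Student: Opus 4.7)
The plan is a bracketing-type argument in the spirit of DeHardt and Wright: I would approximate the infinite class $\cM$ of monotone $M$-bounded functions by a finite family of monotone ``staircase'' brackets, and then combine Hoeffding's inequality with a union bound over this finite family. The two structural ingredients this exploits are exactly what the hypotheses provide: monotonicity, which forces the extrema of any $g\in\cM$ on a rectangular cell of $\R^k$ to be attained at two opposite corners, and absolute continuity of $\mu$ with respect to a product $\Tensor_{\ell=1}^k\mu_\ell$ of $\sigma$-finite factors, which prevents mass from concentrating on coordinate hyperplanes and therefore allows one to refine a rectangular partition of $\R^k$ so that every cell carries arbitrarily small $\mu$-mass.

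First I would carry out the bracketing construction. Fix $\delta\in\Ioo01$, to be chosen later as a function of $\kappa$ and $M$. Using $\sigma$-finiteness of each $\mu_\ell$, select a compact box $B\subseteq\R^k$ with $\mu(\R^k\setminus B)\le\delta$, and then, coordinate by coordinate, choose finitely many cut points $c_\ell^{(0)}<\dotsb<c_\ell^{(m_\ell)}$ so that the resulting rectangular grid partition of $B$ has the property that each closed cell $C$ satisfies $\mu(C)\le\delta$; here the absolute continuity $\mu\ll\Tensor_\ell\mu_\ell$ is what makes this per-cell refinement possible. Discretize the value range by $-M=v_0<v_1<\dotsb<v_J=M$ with spacing at most $\delta$. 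Given $g\in\cM$, monotonicity implies that $g|_C$ attains its infimum and supremum at two opposite corners of $C$; rounding these values down/up to the grid $\{v_j\}$ defines monotone staircase functions $g_L,g_U\from\R^k\to[-M,M]$ with $g_L\le g\le g_U$ pointwise. A telescoping of $g_U-g_L$ along monotone chains of cells, using the total-variation bound $2M$ on each such chain together with $\mu(C)\le\delta$ on each cell and the tail bound on $\R^k\setminus B$, yields
\begin{equation*}
  \spr{g_U-g_L}{\mu}\le\kappa/3
\end{equation*}
for $\delta=\delta(\kappa,M)$ chosen small enough. The collection $\cB$ of all pairs $(g_L,g_U)$ produced this way is finite, with $\setsize\cB\le(J+1)^{2\prod_\ell m_\ell}$.

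Once the bracketing is in place, concentration is routine. For each fixed $h\in\cB$, the random variables $h(X_t)$, $t\in\{1,\dotsc,n\}$, are i.i.d.\ and bounded in $[-M,M]$, so Hoeffding's inequality gives
\begin{equation*}
  \P\bigl(\bigabs{\spr{h}{L_n^{(\omega)}-\mu}}>\kappa/3\bigr)\le2\exp\bigl(-n\kappa^2/(18M^2)\bigr).
\end{equation*}
A union bound over $\cB$ defines the exceptional event and gives the rate $a(\kappa,M):=\kappa^2/(18M^2)$ and prefactor $b(\kappa,M):=2\setsize\cB$. On the complement $\Omega_{\kappa,n,M}$, for every $g\in\cM$ bracketed by $g_L\le g\le g_U$ in $\cB$, the sandwich estimate
\begin{equation*}
  \spr{g}{L_n^{(\omega)}}\le\spr{g_U}{L_n^{(\omega)}}\le\spr{g_U}{\mu}+\kappa/3\le\spr{g}{\mu}+2\kappa/3,
\end{equation*}
together with its symmetric lower counterpart via $g_L$, yields the claimed uniform bound by $\kappa$. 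The almost-sure statement follows by Borel--Cantelli along a sequence $\kappa_n\dnto0$ chosen slowly enough that $\sum_n b(\kappa_n,M)\exp(-a(\kappa_n,M)n)<\infty$.

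The main obstacle is the bracketing step. In the multivariate setting the level sets of monotone functions form a rich staircase family, and one must convert the pointwise total-variation control $2M$ along coordinate-monotone chains into an $L^1(\mu)$-bound on $g_U-g_L$ that is uniform over $\cM$, while simultaneously keeping $\setsize\cB$ finite so the union bound is meaningful. The product absolute continuity is essential precisely here: without it a coordinate hyperplane could carry positive $\mu$-mass, a single cell of any rectangular refinement would retain $\mu$-mass bounded away from zero, and neither the per-cell control $\mu(C)\le\delta$ nor the resulting $L^1$-bracketing estimate would be available.
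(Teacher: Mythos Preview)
The paper does not supply a proof of this theorem; it is quoted verbatim from \cite{SchumacherSV-16} with the comment that it is ``based on results by DeHardt and Wright, see \cite{DeHardt1971,Wright1981}''. So there is no in-paper argument to compare against, and your bracketing-plus-Hoeffding-plus-union-bound scheme is exactly the architecture of those references.

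One step in your sketch is not yet closed. You write that ``a telescoping of $g_U-g_L$ along monotone chains of cells, using the total-variation bound $2M$ on each such chain together with $\mu(C)\le\delta$'' yields $\spr{g_U-g_L}{\mu}\le\kappa/3$. For $k=1$ this is immediate: the oscillations along the single chain genuinely telescope to at most~$2M$, so the sum $\sum_C\text{osc}_C(g)\mu(C)\le2M\delta$. For $k\ge2$ the cells do not form a single chain, and if one partitions them into chains the bound becomes $(\text{number of chains})\cdot2M\delta$. The number of chains grows with the grid resolution (it is at least the size of a maximal antichain), so this product is small only if $\delta$ is chosen \emph{in tandem} with the grid, not as an independent parameter as your sketch suggests. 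Moreover, oscillations telescope only along ``diagonal'' chains (where the top corner of one cell is the bottom corner of the next), which constrains the chain decomposition further. This is precisely the place where the hypothesis $\mu\ll\Tensor_\ell\mu_\ell$ does real work beyond merely allowing refinement: it guarantees that antichains of cells (equivalently, the cells cut by any level set of a monotone~$g$) carry uniformly small $\mu$-mass after refinement, which is what Wright's lower-layer analysis supplies. You correctly flag this as ``the main obstacle'', but the sentence as written does not yet discharge it.

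Everything downstream of the bracketing---Hoeffding on each of the finitely many bracket functions, the union bound giving $a=\kappa^2/(18M^2)$ and $b=2\setsize\cB$, the sandwich via $g_L\le g\le g_U$, and the Borel--Cantelli passage to the almost-sure statement---is correct and standard.
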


In the present situation we encounter several challanges
when applying \cref{dewright}, caused by our tiling scheme.
\begin{itemize}
\item Each~$\Lambda_j$ is tiled using $N(\epsilon)$ different shapes.
  Thus, the corresponding random variables (for different shapes)
  are \emph{not identically distributed}.
\item In an $\epsilon$-quasi tiling, translates of the same shape~$K_i$
  are allowed to overlap.
  Thus, the corresponding random variables are
  \emph{not necessarily independent}.
\end{itemize}

The first point can be handled by applying Glivenko--Cantelli theory
for each shape~$K_i$ separately.
The second point is more challenging.
The core of the following approach is the ``generation of independence''
by resampling of the overlapping areas using conditional probabilities
and controlling errors introduced on the altered areas with their volume.
Let us explain this in detail.

\begin{figure}
\begin{center}
\pgfooclass{foelner}{
  \attribute size x = 1;
  \attribute size y = 1;
  \attribute x = 0;
  \attribute y = 0;
  \attribute r = .1;

  \method foelner(#1,#2;#3,#4;#5) { 
    \pgfooset{size x}{#1}
    \pgfooset{size y}{#2}
    \pgfooset{x}{#3}
    \pgfooset{y}{#4}
    \pgfooset{r}{#5}
  }

  \method get size x(#1) { 
    \pgfooget{size x}{#1}
  }

  \method get size y(#1) { 
    \pgfooget{size y}{#1}
  }

  \method get x(#1) { 
    \pgfooget{x}{#1}
  }

  \method get y(#1) { 
    \pgfooget{y}{#1}
  }

  \method get r(#1) { 
    \pgfooget{r}{#1}
  }

  \method drawwithbourboundary(#1) { 
    \pgfooget{size x}{\sizex}
    \pgfooget{size y}{\sizey}
    \pgfooget{x}{\x}
    \pgfooget{y}{\y}
    \pgfooget{r}{\r}
    \coordinate (ul) at ($(\x,\y)-(\sizex,\sizey)$);
    \coordinate (or) at ($(\x,\y)+(\sizex,\sizey)$);
    \draw[#1] (ul) rectangle (or);
    \draw[#1,densely dashed] ($(ul)+\r*(1,1)$) rectangle ($(or)-\r*(1,1)$);
  }

  \method drawtile(#1) { 
    \pgfooget{size x}{\sizex}
    \pgfooget{size y}{\sizey}
    \pgfooget{x}{\x}
    \pgfooget{y}{\y}
    \coordinate (ul) at ($(\x,\y)-(\sizex,\sizey)$);
    \coordinate (or) at ($(\x,\y)+(\sizex,\sizey)$);
    \draw[#1] (ul) rectangle (or);
  }

  \method isect(#1) { 
    \foreach \o in {#1} {
      \pgfoothis.get size x(\sx)
      \pgfoothis.get size y(\sy)
      \pgfoothis.get x(\x)
      \pgfoothis.get y(\y)
      \pgfoothis.get r(\r)
      \o.get size x(\Sx)
      \o.get size y(\Sy)
      \o.get x(\X)
      \o.get y(\Y)
      \o.get r(\R)
      \pgfmathsetmacro{\ulx}{max(\x-\sx,\X-\Sx)}
      \pgfmathsetmacro{\uly}{max(\y-\sy,\Y-\Sy)}
      \pgfmathsetmacro{\orx}{min(\x+\sx,\X+\Sx)}
      \pgfmathsetmacro{\ory}{min(\y+\sy,\Y+\Sy)}
      \pgfmathtruncatemacro{\intersection}{(\ulx < \orx) && (\uly < \ory)}
      \ifthenelse{\equal{\intersection}{1}}
        {\draw[very thick, pattern=north east lines]
          (\ulx,\uly) rectangle (\orx,\ory);}
        {}
    }
  }

  \method mark coordinates(#1) { 
    \pgfoothis.get x(\x)
    \pgfoothis.get y(\y)
    \pgfoothis.get size x(\sx)
    \pgfoothis.get size y(\sy)
    \pgfoothis.get r(\r)
    \coordinate (#1) at (\x,\y);
    \coordinate (#1-ul) at (\x-\sx,\y+\sy);
    \coordinate (#1-ll) at (\x-\sx,\y-\sy);
    \coordinate (#1-ur) at (\x+\sx,\y+\sy);
    \coordinate (#1-lr) at (\x+\sx,\y-\sy);
    \pgfmathsetmacro{\sxr}{\sx-\r}
    \pgfmathsetmacro{\syr}{\sy-\r}
    \coordinate (#1-ul') at (\x-\sxr,\y+\syr);
    \coordinate (#1-ll') at (\x-\sxr,\y-\syr);
    \coordinate (#1-ur') at (\x+\sxr,\y+\syr);
    \coordinate (#1-lr') at (\x+\sxr,\y-\syr);
  }

  \method drawatcenter([#1]#2) { 
    \pgfoothis.get x(\x)
    \pgfoothis.get y(\y)
    \draw[#1] (\x,\y) #2;
  }

  \method markandnamecenter(#1,#2) { 
    \pgfoothis.get x(\x)
    \pgfoothis.get y(\y)
    \draw[fill] (\x,\y) circle (.4mm) node[#1] {#2};
  }

  \method boundary coordinate(#1,#2) { 
    \pgfoothis.get x(\x)
    \pgfoothis.get y(\y)
    \pgfoothis.get size x(\sx)
    \pgfoothis.get size y(\sy)
    \pgfmathsetmacro{\cx}{\x}
    \pgfmathsetmacro{\cy}{\y}
    \pgfmathtruncatemacro{\xpos}{cos(#1) >= 0
                              && \sx * sin(#1) >= -\sy * cos(#1)
                              && \sx * sin(#1) <= \sy * cos(#1)}
    \ifthenelse{\equal{\xpos}{1}} 
     {\pgfmathsetmacro{\cx}{\x + \sx}
      \pgfmathsetmacro{\cy}{\y + tan(#1) * \sx}}
     {
    \pgfmathtruncatemacro{\ypos}{(cos(#1) >= 0
                              && \sx * sin(#1) >= \sy * cos(#1))
                              || (cos(#1) <= 0
                              && \sx * sin(#1) >= -\sy * cos(#1))}
    \ifthenelse{\equal{\ypos}{1}} 
     {\pgfmathsetmacro{\cx}{\x + cot(#1) * \sy}
      \pgfmathsetmacro{\cy}{\y + \sy}}
     {
    \pgfmathtruncatemacro{\xneg}{cos(#1) <= 0
                              && \sx * sin(#1) <= -\sy * cos(#1)
                              && \sx * sin(#1) >= \sy * cos(#1)}
    \ifthenelse{\equal{\xneg}{1}} 
     {\pgfmathsetmacro{\cx}{\x - \sx}
      \pgfmathsetmacro{\cy}{\y - tan(#1) * \sx}}
     {
    \pgfmathtruncatemacro{\yneg}{(cos(#1) <= 0
                              && \sx * sin(#1) <= \sy * cos(#1))
                              || (cos(#1) >= 0
                              && \sx * sin(#1) <= -\sy * cos(#1))}
    \ifthenelse{\equal{\yneg}{1}} 
     {\pgfmathsetmacro{\cx}{\x - cot(#1) * \sy}
      \pgfmathsetmacro{\cy}{\y - \sy}}
     {\node {\Large This case cannot occur!};
    }}}}
    \coordinate (#2) at (\cx,\cy);
  }
}

\begin{tikzpicture}
  \pgfoonew\K=new foelner(2,1;0,3;0.2)
  \K.drawwithbourboundary(thick)
  \K.markandnamecenter(below left,$\id$)
  \K.mark coordinates(Kc)
  \K.boundary coordinate(15,Kboundary)
  \node[right] at (Kboundary) {$K$};
  \node[below left] at ($(Kboundary) - (.2,0)$) {$K^r$};
  \pgfoonew\U=new foelner(5,3;.75,-1.4;0)
  \U.drawtile(thick)
  \U.boundary coordinate(25,Uboundary)
  \node[right] at (Uboundary) {$\Lambda$};
  \pgfoonew\f=new foelner(2,1;.5,.2;0.2)     \f.mark coordinates(f)
  \pgfoonew\g=new foelner(2,1;-2,-1.1;0.2)   \g.mark coordinates(g)
  \pgfoonew\h=new foelner(2,1;3,-1.5;0.2)    \h.mark coordinates(h)
  \pgfoonew\i=new foelner(2,1;0,-2.5;0.2)    \i.mark coordinates(i)
  \pgfoonew\j=new foelner(2,1;3.5,-3.2;0.2)  \j.mark coordinates(j)
  \f.drawwithbourboundary()
  \f.boundary coordinate(15,fboundary)
  \node[right] at (fboundary) {$Kt_1$};
  \g.drawwithbourboundary()
  \g.boundary coordinate(135,gboundary)
  \node[above] at (gboundary) {$Kt_2$};
  \h.drawwithbourboundary()
  \h.boundary coordinate(50,hboundary)
  \node[above] at (hboundary) {$Kt_3$};
  \i.drawwithbourboundary()
  \i.boundary coordinate(200,iboundary)
  \node[left] at (iboundary) {$Kt_4$};
  \j.drawwithbourboundary()
  \j.boundary coordinate(195,jboundary)
  \node[below left] at (jboundary) {$Kt_5$};
  \fill[pattern=north east lines]
    (f-ul') -- (f-ul' |- g-ur) -- (g-ur) -- (g-ur |- f-lr')
    -- (f-lr' -| h-ul) -- (h-ul) -- (h-ul -| f-ur') -- (f-ur') -- cycle;
  \fill[pattern=north east lines]
    (g-ul') -- (g-ll') -- (g-ll' -| i-ul) -- (i-ul) -- (i-ul -| g-lr')
    -- (g-lr' |- f-ll) -- (f-ll) -- (f-ll |- g-ur') -- cycle;
  \fill[pattern=north east lines]
    (h-ur') -- (h-ur' |- j-ul) -- (j-ul -| i-ur) -- (i-ur) -- (i-ur -| h-ul')
    -- (h-ul' |- f-lr) -- (f-lr) -- (f-lr |- h-ur') -- cycle;
  \fill[pattern=north east lines]
    (i-ll') -- (i-ll' -| j-ul) -- (j-ul |- h-ll) -- (h-ll) -- (h-ll |- i-ur')
    -- (i-ur' -| g-lr) -- (g-lr) -- (g-lr -| i-ll') -- cycle;
  \fill[pattern=north east lines]
    (j-ll') -- (j-lr') -- (j-ur') -- (j-ur' -| h-lr) -- (h-lr)
    -- (h-lr -| i-lr) -- (i-lr) -- (i-lr -| j-ll') -- cycle;
  \node[right,fill=white,inner sep=1mm] at (f) {$U^{t_1}$};
  \node[left,fill=white,inner sep=1mm] at (g) {$U^{t_2}$};
  \node[fill=white,inner sep=1mm] at (h) {$U^{t_3}$};
  \node[fill=white,inner sep=1mm] at (i) {$U^{t_4}$};
  \node[fill=white,inner sep=1mm] at (j) {$U^{t_5}$};
\end{tikzpicture}
\end{center}
\caption{$\epsilon$-covering and independence structure:
  The set~$\Lambda=\Lambda_j\subseteq G$
  is $\epsilon$-quasi covered by copies of~$K=K_i$
  with centers in~$T=T_i^j(\epsilon)=\{t_1,\dotsc,t_5\}$.
  The sets $U^t=U^{i,j,t}$, 
  $t\in T$, here marked by diagonal stripes, have at least distance~$r$
  and satisfy $\setsize{U^t}\ge(1-\epsilon)\setsize K$.}
\label{fig:vecovering}
\end{figure}

Fix $\ve>0$, $i\in\{1,\dotsc,N(\ve)\}$ and $j\in\N$, $j\ge j_0(\ve)$,
cf.\ \cref{thm:STP},
and consider \Cref{fig:vecovering}, which sketches a tile $K=K_i$,
a finite set $\Lambda=\Lambda_j$, and the translates~$Kt$, $t\in T:=T_i^j(\ve)$,
of $K=K_i$ from an $\epsilon$-quasi tiling.
The sets
\begin{equation}\label{eq:Uijt}
  U^{i,j,t}
  :=(K_i^rt)\setminus\bigl(K_i(T_i^j(\ve)\setminus\{t\})\bigr)
  \subseteq G\text,\quad t\in T\text,
\end{equation}
are marked with stripes.
Their distance is at least
\begin{equation}\label{eq:distUU'}
  d(U^{i,j,t},U^{i,j,t'})
  \ge d(K_i^rt,G\setminus K_it)
  >r
  \text,
  \quad t\ne t'
  \text,
\end{equation}
so the colors there are $\P$-independent from each other.
Unfortunately, if we take only the values on~$U^{i,j,t}$, $t\in T$,
we will end up with an independent, but not identically distributed sample.
We therefore resample independent colors in $K^r\setminus U^{i,j,t}$.
Fortunately, the sets~$U^{i,j,t}$
are large enough to compensate this small random perturbation.
The following lemma specifies the resampling procedure.

\begin{Lemma}\label{lemma:independence}
  Let $\ve>0$ and
  $I:=\Union_{i=1}^{N(\ve)}\Union_{j=j_0(\ve)}^\infty\{(i,j)\}\times T_i^j(\ve)$.
  There exists a probability space
  $(\superOmega,\Borel(\superOmega),\superP)$
  and random variables $X,X^{i,j,t}\from\superOmega\to\Omega$,
  $(i,j,t)\in I$, such that for all $(i,j,t)\in I$,
  \begin{enumerate}[(i),nosep]
  \item\label{ind:distr} $X$ and $X^{i,j,t}$ have distribution~$\P$,
  \item\label{ind:keep} $X$ and~$X^{i,j,t}$ agree on $U^{i,j,t}$
    $\superP$-almost surely, and
  \item\label{ind:ind} the random variables in the set
    $\{X^{i,j,t'}\}_{t'\in T_i^j(\ve)}$ are $\superP$-independent.
  \end{enumerate}
\end{Lemma}
\begin{proof}
  \Cref{thm:resampling}
  solves the problem of resampling in an abstract setting.
  We apply the result here as follows.
  Since we use the canonical probability space in our construction,
  we apply \cref{thm:resampling} with
  $(S,\cS):=(\Omega,\cA)$, $X:=\id_{\Omega}$,
  $I:=\Union_{i=1}^{N(\ve)}\Union_{j=j_0(\ve)}^\infty\{(i,j)\}\times T_i^j(\ve)$,
  and $\cY_{j'}:=\sigma(\Pi_{U^{j'}})$, $j'\in I$.
  \Cref{thm:resampling} provides the following quantities,
  which we here want to use as
  $(\superOmega,\superA,\superP):=(\superOmega,\superA,\superP)$,
  $X:=\superX$, and
  $X^{i,j,t}:=X_{j'}$ for all $j'=(i,j,t)\in I$.
  The properties~\ref{ind:distr} and~\ref{ind:keep}
  follow directly from \cref{thm:resampling}\ref{resampling:distribution},%
  \ref{resampling:equality}.
  With~\eqref{eq:distUU'}, \cref{thm:resampling}\ref{resampling:independ}
  implies~\ref{ind:ind}.
\end{proof}

Next, we control the error we introduce by using our independent
samples instead of the dependent ones.
\begin{Lemma}\label{lemma:insertindependence}
  Fix $\ve>0$, an admissible~$f$ and $U\subseteq K\in\cF$.
  For $\omega,\tilde\omega\in\Omega$ with $\omega_U=\tilde\omega_U$,
  we have
  \begin{equation*}
    \norm{f(\omega,K)-f(\tilde\omega,K)}
    \le 2b(K)+2(2\constb+\constf)\setsize{K\setminus U}\text.
  \end{equation*}
  In particular, in the notation from
  \eqref{eq:def:fLambda} -- \eqref{def:Lijromegar}
  and with the corresponding empirical measure
  \begin{equation*}
    \superL_{i,j}^{r,\superomega}(\ve)
    :=\frac1{\setsize{T_i^j(\ve)}}
      \sum_{t\in T_i^j(\ve)}\dirac{(\tau_tX^{i,j,t}(\superomega))_{K_i(\ve)}}
    \qquad(\superomega\in\superOmega)\text,
  \end{equation*}
  we have for $\superP$-almost all $\superomega\in\superOmega$ that
  \begin{equation*}
    \norm{\spr{f_i^r(\ve)}
      {L_{i,j}^{r,X(\superomega)}(\ve)-\superL_{i,j}^{r,\superomega}(\ve)}}
    \le2b(K_i^r(\ve))+2(2\constb+\constf)\ve\setsize{K_i^r(\ve)}\text.
  \end{equation*}
\end{Lemma}
\begin{proof}
  The values of~$\omega$ on~$U$ determine $f(\omega,K)$ up to
  \begin{align*}
    \norm{f(\omega,K)-f(\omega,U)}&
    \le\norm{f(\omega,K)-f(\omega,U)-f(\omega,K\setminus U)}
      +\norm{f(\omega,K\setminus U)}\\&
    \le b(U)+b(K\setminus U)+\norm{f(\omega,K\setminus U)}
    \le b(U)+(\constb+\constf)\setsize{K\setminus U}\text.
  \end{align*}
  With the fourth point in~\ref{extensive}, we can continue this estimate with
  \begin{align*}
    b(U)&
    \le b(K\setminus(K\setminus U))
    \le b(K)+b(K\setminus U)
    \le b(K)+\constb\setsize{K\setminus U}\text.
  \end{align*}
  We now employ the triangle inequality to show the first claim:
  For $\omega,\tilde\omega\in\Omega$ with $\omega_U=\tilde\omega_U$, we have
  \begin{align*}
    \norm{f(\omega,K)-f(\tilde\omega,K)}&
    \le\norm{f(\omega,K)-f(\omega,U)}
      +\norm{f(\tilde\omega,U)-f(\tilde\omega,K)}\\&
    \le 2\bigl(b(K)+(2\constb+\constf)\setsize{K\setminus U}\bigr)\text.
  \end{align*}
  This calculation allows us to change~$\omega$ on $K\setminus U$
  to the independent values provided by \cref{lemma:independence}.
  To implement this, observe that for $\superP$-almost all
  $\superomega\in\superOmega$ and all $i\in\{1,\dotsc,N(\ve)\}$,
  $j\in\N$, $j\ge j_0(\ve)$ and $t\in T_i^j(\ve)$, the set
  $U^{i,j,t}$ from \eqref{eq:Uijt}
  exhausts $K_i^r(\ve)t$ up to a fraction of~$\ve$:
  $\setsize{K_i^r(\ve)t\setminus U^{i,j,t}}\le\ve\setsize{K_i^r(\ve)}$.
  By construction, on~$U^{i,j,t}$, the colors are preserved:
  $U^{i,j,t}\subseteq\{g\in K_i^r(\ve)t\mid X_g(\superomega)=X_g^{i,j,t}(\superomega)\}$.
  Together with \cref{la:empmeasure2} and the triangle inequality,
  this immediately implies for
  $\superP$-almost all $\superomega\in\superOmega$ that
  \begin{align*}
    \norm{\spr{f_i^r(\ve)}{L_{i,j}^{r,X(\superomega)}(\ve)-\superL_{i,j}^{r,\tilde\omega}(\ve)}}&
    \le\frac1{\setsize{T_i^j(\ve)}}\sum_{t\in T_i^j(\ve)}
      \norm{f(K_i^r(\ve)t,\omega)-f(K_i^r(\ve)t,X^{i,j,t}(\superomega))}\\&
    \le2b(K_i^r(\ve))+2(2\constb+\constf)\ve\setsize{K_i^r(\ve)}\text.\qedhere
  \end{align*}
\end{proof}

The empirical measure $L_{i,j}^{r,X(\superomega)}$
formed by independent samples should converge to
\begin{equation*}
  \P_i^r(\epsilon):=\P_{K_i^r(\epsilon)}\text.
\end{equation*}
The following result makes this notion precise.
It is the main result of this section.
\begin{Proposition}\label{la:gc-main-lemma}
  Let~$G$ be a finitely generated amenable group,
  let $\cA\in\Borel(\R)$ and $(\Omega:=\cA^G,\cB(\Omega),\P)$
  a probability space such that~$\P$ satisfies \labelcref{M1,M2,M3}.
  Moreover, let~$(\Lambda_n)$ and~$(Q_n)$ be F{\o}lner sequences,
  where~$(Q_n)$ is nested and satisfies \eqref{eq:monotone0}.
  For given $\ve\in\Ioo0{1/10}$, let~$K_i(\epsilon)$,
  $i\in\{1,\dots,N(\epsilon)\}$, and $j_0(\ve)$ be given by \cref{thm:STP}.
  Furthermore, let~$\cU$ be an admissible set of admissible fields.
  \par
  Then, for all $\kappa>0$, there exist
  $a(\ve,\kappa,\constf[\cU]),b(\ve,\kappa,\constf[\cU])>0$
  such that for all $j\ge j_0(\ve)$,
  there is an event $\Omega_{j,\ve,\kappa,\constf[\cU]}\in\Borel(\Omega)$
  with large probability
  \begin{equation*}
    \P(\Omega_{j,\ve,\kappa,\constf[\cU]})
    \ge1-b(\ve,\kappa,\constf[\cU])
      \exp(-a(\ve,\kappa,\constf[\cU])\setsize{\Lambda_j})
  \end{equation*}
  and the property that for all $\omega\in\Omega_{j,\ve,\kappa,\constf[\cU]}$
  and $f\in\cU$, it holds true that
  \begin{equation*}
    \biggnorm{\sum_{i=1}^{N(\epsilon)}\eta_i(\epsilon)
      \frac{\spr{f_{i}^r(\epsilon)}{L^{r,\omega}_{i,j}(\epsilon)}}
           {\setsize{ K_i(\epsilon)}}
      -\sum_{i=1}^{N(\epsilon)}\eta_i(\epsilon)
      \frac{\spr{f_{i}^r(\epsilon)}{\P_i^r(\epsilon)}}
           {\setsize{ K_i(\epsilon)}}}
    \le2\beta(\ve)+2(2\constb+\constf)\ve+\kappa\text.
  \end{equation*}
  In particular, there is an event $\tilde\Omega\in\cB(\Omega)$
  with $\P(\tilde\Omega)=1$ such that for all $\omega\in\tilde\Omega$,
  we have
  \begin{equation*}
    \lim_{\ve\dnto0}\sup_{f\in\cU}
    \biggnorm{\sum_{i=1}^{N(\epsilon)}\eta_i(\epsilon)
      \frac{\spr{f_{i}^r(\epsilon)}{L^{r,\omega}_{i,j}(\epsilon)}}
           {\setsize{ K_i(\epsilon)}}
      -\sum_{i=1}^{N(\epsilon)}\eta_i(\epsilon)
      \frac{\spr{f_{i}^r(\epsilon)}{\P_i^r(\epsilon)}}
           {\setsize{ K_i(\epsilon)}}}
    =0\text.
  \end{equation*}
\end{Proposition}

\begin{proof}
  Fix $f\in\cU$.
  For $\ve\in\Ioo0{1/10}$, $j\in\N$ and $\omega\in\Omega$,
  two applications of the triangle inequality give
  \begin{align}\label{Delta1}
  \Delta_f(\ve,\omega)&:=
  \biggnorm{\sum_{i=1}^{N(\epsilon)}\eta_i(\epsilon)
    \frac{\spr{f_{i}^r(\epsilon)}{L^{r,\omega}_{i,j}(\epsilon)}}
         {\setsize{ K_i(\epsilon)}}
    -\sum_{i=1}^{N(\epsilon)}\eta_i(\epsilon)
      \frac{\spr{f_{i}^r(\epsilon)}{\P_i^r(\epsilon)}}
           {\setsize{K_i(\epsilon)}}}\notag\\&
  \le\sum_{i=1}^{N(\epsilon)}\frac{\eta_i(\epsilon)}{\setsize{K_i(\epsilon)}}
    \bignorm{\spr{f_{i}^r(\epsilon)}{L^{r,\omega}_{i,j}(\epsilon)
      -\P_i^r(\epsilon)}}\notag\\&
  \le\inf_{\superomega\in X^{-1}(\{\omega\})}\biggl(
     \sum_{i=1}^{N(\epsilon)}\eta_i(\epsilon)\gamma_1(i,j,\epsilon,\superomega)
    +\sum_{i=1}^{N(\epsilon)}\eta_i(\epsilon)\gamma_2(i,j,\epsilon,\superomega)
    \biggr)\text,
  \end{align}
  where $\superomega\in\superOmega$ extends~$\omega$,
  i.\,e.\ $X(\superomega)=\omega$ in the notation of \cref{lemma:independence},
  and
  \begin{align*}
    \gamma_1(i,j,\epsilon,\superomega)&
    :=\frac{\bignorm{\spr{f_{i}^r(\epsilon)}{L^{r,\omega}_{i,j}(\epsilon)
      -\superL_{i,j}^{r,\superomega}(\ve)}}}{\setsize{K_i(\ve)}}
    \qtext{and}\\
    \gamma_2(i,j,\epsilon,\superomega)&
    :=\frac{\bignorm{\spr{f_i^r(\ve)}{\superL_{i,j}^{r,\superomega}(\ve)
      -\P_i^r(\epsilon)}}}{\setsize{K_i(\ve)}}
    \text.
  \end{align*}

  By \cref{lemma:insertindependence} and assumption \eqref{eq:monotone0},
  we see that for all $\superomega\in\superOmega$ with $X(\superomega)=\omega$
  \begin{equation*}
    \gamma_1(i,j,\ve,\superomega)
    \le\frac{2b(K_i^r(\ve))}{\setsize{K_i^r(\ve)}}+2(2\constb+\constf)\ve
    \le\frac{2b(Q_i)}{\setsize{Q_i}}+2(2\constb+\constf)\ve\text.
  \end{equation*}
  With \cref{la:etaN}\ref{etaNa} and \eqref{eq:beta},
  we yield the deterministic upper bound
  \begin{equation*}
    \sum_{i=1}^{N(\ve)}\eta_i(\ve)\gamma_1(i,j,\ve,\superomega)
    \le2\beta(\ve)+2(2\constb+\constf)\ve
  \end{equation*}
  for all $\superomega\in X^{-1}(\omega)\subseteq\superOmega$.
  By now, our overall inequality~\eqref{Delta1} reads
  \begin{equation}\label{Delta2}
    \Delta_f(\ve,\omega)
    \le2\beta(\ve)+2(2\constb+\constf)\ve
    +\inf_{\superomega\in X^{-1}(\{\omega\})}
      \sum_{i=1}^{N(\epsilon)}\eta_i(\epsilon)\gamma_2(i,j,\ve,\superomega)
    \text.
  \end{equation}

  To deal with~$\gamma_2$,
  recall that the norm on the Banach space~$\B$
  our admissible fields map into is the $\sup$-norm.
  We translate the $\sup$-norm into the Glivenko--Cantelli setting as follows.
  Let
  \begin{equation*}
    \cM_f:=\{g_{i,E}^r\from\R^{\setsize{K_i^r(\ve)}}\to\R,
    g_{i,E}^r(\omega):=f_i^r(\omega)(E)/\setsize{K_i(\ve)}\mid E\in\R\}\text.
  \end{equation*}
  Therefore, we can write
  \begin{equation*}
    \gamma_2(i,j,\ve,\superomega)
    =\sup_{g\in\cM_f}\abs{
      \spr g{\superL_{i,j}^{r,\superomega}(\ve)-\P_i^r(\ve)}}
    \le\sup_{f\in\cU}\sup_{g\in\cM_f}\abs{
      \spr g{\superL_{i,j}^{r,\superomega}(\ve)-\P_i^r(\ve)}}
    \text.
  \end{equation*}
  From \eqref{eq:bound-K_f} we see that the fields in
  $\cM_\cU:=\Union_{f\in\cU}\cM_f$ are bounded by~$\constf[\cU]$.
  As assumed in \ref{monotone}, the fields in~$\cM_\cU$ are also monotone.
  By \cref{lemma:independence}\ref{ind:ind}, the samples are independent, too.
  This is crucial in order to invoke \cref{dewright}.
  We thus obtain that, for each $\kappa>0$, $\ve\in\Ioo0{1/10}$,
  $i\in\{1,\dotsc,N(\ve)\}$ and $j\in\N$, $j\ge j_0(\ve)$,
  there are $a_i\equiv a(i,\ve,\kappa,\constf[\cU])>0$,
  $b_i\equiv b(i,\ve,\kappa,\constf[\cU])>0$ and $\superOmega_{i,j}\equiv
    \superOmega_{i,j,\ve,\kappa,\constf[\cU]}\in\Borel(\superOmega)$ such that
  \begin{equation*}
    \superP(\superOmega_{i,j})\ge1-b_i\exp(-a_i\setsize{T_i^j(\ve)})
    \qtextq{and}
    \sup_{\superomega\in\superOmega_{i,j}}
      \gamma_2(i,j,\ve,\superomega)\le\kappa
    \text.
  \end{equation*}
  We need this estimate for all~$i\in\{1,\dotsc,N(\ve)\}$
  simultaneously and consider
  \begin{equation*}
    \superOmega_j\equiv\superOmega_{j,\ve,\kappa,\constf[\cU]}
    :=\Isect_{i=1}^{N(\ve)}\superOmega_{i,j}\text.
  \end{equation*}
  To estimate the probability of~$\superOmega_j$ is the next step.
  From \eqref{eq:Tbound} and \cref{la:etaN}\ref{etaNb}, we note that
  \begin{equation*}
    \setsize{T_i^j(\ve)}
    \ge\Bigl(\eta_i(\ve)-\frac{\ve^2}{N(\ve)}\Bigr)
      \frac{\setsize{\Lambda_j}}{\setsize{K_i(\ve)}}
    \ge\frac{(1-\ve)\ve}{N(\ve)\setsize{K_i(\ve)}}\setsize{\Lambda_j}
    \text.
  \end{equation*}
  With the definition
  \begin{equation*}
    a\equiv a_{\ve,\kappa,\constf[\cU]}
    :=\frac{(1-\ve)\ve}{N(\ve)}
      \min_{i\in\{1,\dotsc,N(\ve)\}}\frac{a_i}{\setsize{K_i(\ve)}}
    \qtextq{and}
    b\equiv b_{\ve,\kappa,\constf[\cU]}
    :=2\sum_{i=1}^{N(\ve)\}}b_i
    \text,
  \end{equation*}
  we get $\superP(\superOmega_{i,j})\ge1-b_i\exp(-a\setsize{\Lambda_j})$
  and
  \begin{equation*}
    \superP(\superOmega_j)
    =1-\superP\Bigl(\Union_{i=1}^{N(\ve)}
      \superOmega\setminus\superOmega_{i,j}\Bigr)
    \ge1-\sum_{i=1}^{N(\ve)}\superP(\superOmega\setminus\superOmega_{i,j})
    \ge1-\frac{b\exp(-a\setsize{\Lambda_j})}2\text.
  \end{equation*}

  Next, we should transition from
  $(\superOmega,\Borel(\superOmega),\superP)$ to $(\Omega,\Borel(\Omega),\P)$.
  The set~$X(\superOmega_j)\subseteq\Omega$ seems to be a good candidate,
  because for all $\omega\in X(\superOmega_j)$, there exists
  $\superomega\in X^{-1}(\{\omega\})
    \isect\Isect_{i=1}^{N(\ve)}\superOmega_{i,j}$,
  and thus we can estimate
  \begin{align*}
    \inf_{\superomega\in X^{-1}(\{\omega\})}
    \sum_{i=1}^{N(\epsilon)}\eta_i(\epsilon)\gamma_2(i,j,\ve,\superomega)
    \le\sum_{i=1}^{N(\epsilon)}\eta_i(\epsilon)\kappa
    \le\kappa\text.
  \end{align*}
  Together with~\eqref{Delta2},
  this inequality shows the claimed bound on~$\Delta_f(\ve,\omega)$
  for all $\omega\in X(\superOmega_j)$.

  Unfortunately, the image of a measurable set under a measurable map
  is not necessarily measurable, but only analytic,
  see \cite[Theorem~10.23]{AliprantisBorder2006}.
  At least the outer measure of our candidate is bounded from below by
  \begin{align*}
    \P^*(X(\superOmega_j))&
    :=\inf_{B\in\Borel(\Omega),
        X(\superOmega_j)\subseteq B}\P(B)
    =\inf_{B\in\Borel(\Omega),
      X(\superOmega_j)\subseteq B}\superP(X\in B)\\&
    \ge\inf_{B\in\Borel(\Omega),X(\superOmega_j)\subseteq B}
      \superP(\superOmega_j)
    =\superP(\superOmega_j)
    \ge1-b\exp(-a\setsize{\Lambda_j})/2\text.
  \end{align*}
  From \cite[Lemma~10.36]{AliprantisBorder2006},
  we learn that~$\P^*$ is a nice capacity, and the Choquet Capacity Theorem
  \cite[Theorem~10.39]{AliprantisBorder2006}
  states for the analytic set~$X(\superOmega_j)$ that
  \begin{equation*}
    \P^*(X(\superOmega_j))
    =\sup_{K\subseteq X(\superOmega_j)\text{ compact}}\P(K)
    \text.
  \end{equation*}
  Thus, there exists a compact subset
  $\Omega_{j,\ve,\kappa,\constf[\cU]}\subseteq
    X(\superOmega_j)$
  with probability at least $1-b\exp(-a\setsize{\Lambda_j})$.

  We finish the proof with a standard Borel--Cantelli argument to show that
  $\tilde\Omega$ exists as claimed.
  For all $\kappa>0$, the events
  \begin{equation*}
    A_\kappa
    :=\Union_{n=j_0(\ve)}^\infty\Isect_{j=n}^\infty
      \Omega_{j,\ve,\kappa,\constf[\cU]}
  \end{equation*}
  have probability~$1$, since
  \begin{equation*}
    \sum_{j=j_0(\ve)}^\infty\P(\Omega\setminus\Omega_{j,\ve,\kappa,\constf[\cU]})
    \le\sum_{j=j_0(\ve)}^\infty b\exp(-a\setsize{\Lambda_j})
    \le b\sum_{j=j_0(\ve)}^\infty\exp(-a)^j
    <\infty\text.
  \end{equation*}
  Note that by \eqref{Delta2}, $\beta(\ve)\to0$, and by construction of~$A_k$,
  for all $\omega\in A_\kappa$, we have
  \begin{equation*}
    \lim_{\ve\dnto0}\sup_{f\in\cU}\Delta_f(\ve,\omega)\le\kappa\text.
  \end{equation*}
  Thus, the event $\tilde\Omega:=\Isect_{k\in\N}A_{1/k}$
  has full probability $\P(\tilde\Omega)=1$,
  and for all $\omega\in\tilde\Omega$, we have
  $\lim_{\ve\dnto0}\sup_{f\in\cU}\Delta_f(\ve,\omega)=0$.
\end{proof}

\section{Almost additivity and Cauchy sequences}\label{sec:cauchy}

The following calculations are devoted to a Cauchy sequence argument
to obtain the desired limit function~$f^*$.
\begin{Lemma}\label{la:cauchy}
  Let $G$ be a finitely generated amenable group,
  let $\cA\in\Borel(\R)$ and $(\Omega=\cA^G,\cB(\Omega),\P)$
  a probability space such that $\P$ satisfies \labelcref{M1,M2,M3}.
  Moreover, let $f$ be an admissible field
  and $(Q_n)$ a nested F{\o}lner sequence satisfying
  \eqref{eq:monotone0}.
  Then, there exists $f^*\in\B$ with
  \begin{align*}&
    \lim_{\ve\dnto0}\biggnorm{
    \sum_{i=1}^{N(\ve)}\eta_i(\ve)
    \frac{\spr{f_{i}^r(\ve)}{\P_i^r(\ve)}}
         {\setsize{ K_i(\ve)}}-f^*}=0\text,
  \end{align*}
  where for $k\in\N$ and $\epsilon\in\Ioo{1/(k+1)}{1/k}$
  the sets $K_i(\epsilon)$, $i\in\{1,\dots,N(\epsilon)\}$
  are extracted from the sequence $(Q_{n+k})_n$ via Theorem \ref{thm:STP}.
  The approximation error is bounded by
  \begin{equation*}
    \biggnorm{\sum_{j=1}^{N(\ve)}\eta_j(\ve)
      \frac{\spr{f_{j}^r(\ve)}{\P_j^r(\ve)}}{\setsize{ K_j(\ve)}}
      -f^*}
    \le(9\constf+11\constb)\ve+5(4+\constf+\constb)\beta(\ve)
    \text.
  \end{equation*}
\end{Lemma}

\begin{proof}
  In order to prove the existence of $f^*$,
  we study for $\epsilon,\delta\in\Ioo0{1/10}$ the difference
\begin{align*}
 \mathcal D(\epsilon,\delta):=\biggnorm{\sum_{j=1}^{N(\epsilon)}\eta_j(\epsilon)\frac{\spr{f_{j}^r(\epsilon)}{\P_j^r(\epsilon)}}{\setsize{ K_j(\epsilon)}}
-
\sum_{i=1}^{N(\delta)}\eta_i(\delta)\frac{\spr{f_{i}^r(\delta)}{\P_i^r(\delta)}}{\setsize{ K_i(\delta)}}
}.
\end{align*}
Our aim is to show $\lim_{\delta\searrow 0}\lim_{\epsilon\searrow 0}\mathcal D(\epsilon,\delta)=0$.
To prove this, we insert terms which interpolate between the minuend and the subtrahend.
These terms will be given using Theorem~\ref{thm:STP}.
For each $\epsilon\in\Ioc{1/(k+1)}{1/k}$,
we apply Theorem~\ref{thm:STP} to choose the sets $K_j(\epsilon)$,
$j=1,\dotsc,N(\ve)$, from the F{\o}lner sequence $(Q_{n+k})_{n\in\N}$.
The particular choice of the sets $K_j(\epsilon)$, $j=1,\dots,N(\epsilon)$,
as elements of the sequence $(Q_{n+k})_n$ ensures that for given $\delta>0$
we find $\epsilon_0>0$ such that for arbitrary $\epsilon\in\Ioo0{\ve_0}$
each $K_j(\epsilon)$, $j=1,\dots,N(\epsilon)$,
can be $\delta$-quasi tiled with the elements $K_i(\delta)$,
$i=1,\dots,N(\delta)$.
As in Theorem~\ref{thm:STP}, we denote the associated center sets by $T_i^j(\delta)$, where we emphasize the dependence on the parameter $\delta$.

For $K\in \cF$ we use the notation
\begin{align}\label{eq:defF}
F(K):=\spr{f_K}{\P_K}
\end{align}
and hence for the tiles $K_j(\epsilon)$, $i=1,\dots, N(\epsilon)$, we write $F(K_i^r(\epsilon)):=\spr{f_{i}^r(\epsilon)}{\P_i^r(\epsilon)}$.
The function~$F$ is translation invariant, i.\,e.\ for all $K\in\cF$ and $t\in G$ we have $F(K t)=F(K)$.

With the convention \eqref{eq:defF} and using the triangle inequality we obtain
$\mathcal D(\epsilon,\delta) \leq \mathcal D_1(\epsilon,\delta) +\mathcal D_2(\epsilon,\delta),$
where
\begin{align*}
\mathcal D_1(\epsilon,\delta)&:=\biggnorm{\sum_{j=1}^{N(\epsilon)}\eta_j(\epsilon)\frac{F(K_j^r(\epsilon))-\sum_{i=1}^{N(\delta)}\setsize{T_i^j(\delta)} F(K_i^r(\delta))}{\setsize{ K_j(\epsilon)}}
} \text{, and}\\
\cD_2(\epsilon,\delta)&:= \biggnorm{\sum_{j=1}^{N(\epsilon)}\eta_j(\epsilon)\frac{\sum_{i=1}^{N(\delta)}\setsize{T_i^j(\delta)} F(K_i^r(\delta))}{\setsize{ K_j(\epsilon)}}
-
\sum_{i=1}^{N(\delta)}\eta_i(\delta)\frac{F(K_i^r(\delta))}{\setsize{ K_i(\delta)}}
}.
\end{align*}
The translation invariance of~$F$ and the triangle inequality yield
\begin{equation}\label{eq:D1first}
  \mathcal D_1(\epsilon,\delta)
  \leq \sum_{j=1}^{N(\epsilon)}\frac{\eta_j(\epsilon)}{\setsize{ K_j(\epsilon)}}
    \biggnorm{F(K_j^r(\epsilon))-\sum_{i=1}^{N(\delta)}
      \sum_{t\in T_i^j(\delta)} F(K_i^r(\delta)t)}.
\end{equation}
We decompose $K_j^r(\epsilon)$ in the following way
\begin{multline*}
	K_j^r(\epsilon)
	=\Union_{i=1}^{N(\delta)}\Union_{t\in T_i^j(\delta)}K_i^r(\delta)t
	\quad \dunion \quad
	K_j^r(\ve)\setminus\Union_{i=1}^{N(\delta)}
		K_i(\delta)T_i^j(\delta)\quad\dunion\\\quad
	\dunion\Biggl(\biggl(K_j^r(\ve)\setminus\Union_{i=1}^{N(\delta)}
		K_i^r(\delta)T_i^j(\delta)
		\biggr)\isect
		\Union_{i=1}^{N(\delta)}
		\bigl(K_i(\delta)\isect\partial^r(K_i(\delta))\bigr)T_i^j(\delta)\Biggr)
		=:\alpha_1\dunion\alpha_2\dunion\alpha_3\text.
\end{multline*}
By definition of the function~$F$
the almost additivity of the admissible field~$f$ inherits to~$F$.
Note that $\delta$-disjointness of the sets $K_it$, $t\in T_i^j(\delta)$
implies $\delta$-disjointness of the sets $K_i^r t$, $t\in T_i^j(\delta)$.
Therefore, applying almost additivity, \cref{la:quasialmostadd}
and the properties of admissible fields and the boundary term we obtain
\begin{align*}&
	\biggnorm{F(K_j^r(\epsilon))-\sum_{i=1}^{N(\delta)}
		\sum_{t\in T_i^j(\delta)}F(K_i^r(\delta)t)}\\&
	\le\biggnorm{F(K_j^r(\epsilon))-\sum_{i=1}^3F(\alpha_i)}
		+\biggnorm{F(\alpha_1)-\sum_{i=1}^{N(\delta)}\sum_{t\in T_i^j(\delta)}
			F(K_i^r(\delta))}+\norm{F(\alpha_2)}+\norm{F(\alpha_3)}\\&
	\le\sum_{i=1}^3b(\alpha_i)+\delta(3\constf+9\constb)\setsize{K_j(\epsilon)}
		+3\sum_{i=1}^{N(\delta)}\sum_{t\in T_i^j}b(K_i^r(\delta))
		+\constf\setsize{\alpha_2}+\constf\setsize{\alpha_3}\\&
	\le\delta(3\constf+9\constb)\setsize{K_j(\epsilon)}
		+4\sum_{i=1}^{N(\delta)}\sum_{t\in T_i^j(\delta)}b(K_i^r(\delta))
		+(\constf+\constb)\setsize{\alpha_2}+(\constf+\constb)\setsize{\alpha_3}\text.
\end{align*}
Next, we estimate the sizes of~$\alpha_2$ and~$\alpha_3$.
For~$\alpha_3$ we drop some of the intersections in its definition.
In order to give a bound on the size of~$\alpha_2$, we use that~$K_j^r(\ve)$
is $(1-2\ve)$-covered by $\{K_i^r(\delta)\mid i\}$,
more specifically, part~\ref{qt:2epscover} in \cref{def:qt}.
We obtain
\begin{align*}
  \setsize{\alpha_2}
  \le2\delta\setsize{K_j(\epsilon)}
  \qquad\text{and}\qquad
  \setsize{\alpha_3}
  \le\sum_{i=1}^{N(\delta)}\setsize{T_i^j(\delta)}
	\setsize{\partial^r(K_i(\delta))}
	\text,
\end{align*}
and therewith achieve
\begin{align*}
& \biggnorm{F(K_j^r(\epsilon))-\sum_{i=1}^{N(\delta)}\sum_{t\in T_i^j(\delta)} F(K_i^r(\delta)t)}
\\&\leq
  \delta(5\constf+11\constb)\setsize{K_j(\epsilon)} + \sum_{i=1}^{N(\delta)}\setsize{ T_i^j(\delta)} \Bigl(4 b(K_i^r(\delta))+(\constf+\constb)\setsize{\partial^r(K_i(\delta))}\Bigr).
\end{align*}
This together with \eqref{eq:D1first} and part~\ref{etaNa} of Lemma~\ref{la:etaN} yields
\begin{align*}&
  \mathcal D_1(\epsilon,\delta)\\&
  \leq\sum_{j=1}^{N(\epsilon)}\!\biggl(\!\delta(5\constf+11\constb)\eta_j(\epsilon) + \sum_{i=1}^{N(\delta)}\!\frac{\eta_j(\epsilon)\setsize{ T_i^j(\delta)}}{\setsize{ K_j(\epsilon)}} \Bigl(4 b(K_i^r(\delta))+(\constf+\constb)\setsize{\partial^r(K_i(\delta))}\Bigr)\!\biggr)\\
&\leq \delta(5\constf+11\constb) +\sum_{j=1}^{N(\epsilon)}\sum_{i=1}^{N(\delta)}\frac{\eta_j(\epsilon)\setsize{ T_i^j(\delta)}}{\setsize{ K_j(\epsilon)}} \Bigl(4 b(K_i^r(\delta))+(\constf+\constb)\setsize{\partial^r(K_i(\delta))}\Bigr).
\end{align*}
As~$\delta$ is assumed to be smaller than~$1/10$,
we can apply \cref{cor:density},
which gives for arbitrary $i\in\{1,\dots,N(\delta)\}$
and $j\in \{1,\dots,N(\epsilon)\}$
\begin{align*}
  \frac{\setsize{T_i^j(\delta)}}{\setsize{K_j(\epsilon)}}
	\le\frac{\eta_i(\delta)}{\setsize{K_i(\delta)}}
		+4\frac{\delta \eta_i(\delta)}{\setsize{K_i(\delta)}}
	\le5\frac{\eta_i(\delta)}{\setsize{K_i(\delta)}}\text.
\end{align*}
Inserting this in the last estimate for $\cD_1(\epsilon,\delta)$ implies together with part~\ref{etaNa} of Lemma~\ref{la:etaN} that
\begin{align*}
 \mathcal D_1(\epsilon,\delta)
&\leq \delta(5\constf+11\constb) +\sum_{i=1}^{N(\delta)}\frac{5\eta_i(\delta)}{\setsize{ K_i(\delta)}} \Bigl(4 b(K_i^r(\delta))+(\constf+\constb)\setsize{\partial^r(K_i(\delta))}\Bigr).
\end{align*}
Now, we use the monotonicity assumption in \eqref{eq:monotone0}, which allows to replace the elements $K_i^r(\delta)$ and $K_i(\delta)$ by $Q_i^r$ and $Q_i$, respectively:
\begin{align}\label{eq:D1}
 \mathcal D_1(\epsilon,\delta)
&\leq \delta(5\constf+11\constb) +\sum_{i=1}^{N(\delta)}\frac{5\eta_i(\delta)}{\setsize{ Q_i}} \Bigl(4 b(Q_i^r)+(\constf+\constb)\setsize{\partial^r(Q_i)}\Bigr).
\end{align}

Let us proceed with the estimation of $\cD_2(\epsilon,\delta)$:
\begin{align}\label{eq:D2first}
 \cD_2(\epsilon,\delta)&= \biggnorm{\sum_{i=1}^{N(\delta)}F(K_i^r(\delta)) \biggl(\sum_{j=1}^{N(\epsilon)}\eta_j(\epsilon)\frac{\setsize{T_i^j(\delta)} }{\setsize{ K_j(\epsilon)}}
-
\frac{\eta_i(\delta)}{\setsize{ K_i(\delta)}}\biggr)}
.
\end{align}
With the triangle inequality, \cref{cor:density},
and part~\ref{etaNa} of \cref{la:etaN} we obtain
\begin{align*}
	\biggabs{\sum_{j=1}^{N(\epsilon)}
		\eta_j(\epsilon)\frac{\setsize{T_i^j(\delta)}}{\setsize{K_j(\epsilon)}}
		-\frac{\eta_i(\delta)}{\setsize{ K_i(\delta)}}}&
	\le\sum_{j=1}^{N(\epsilon)}\eta_j(\epsilon)
		\biggabs{\frac{\setsize{T_i^j(\delta)}}{\setsize{K_j(\epsilon)}}
		-\frac{\eta_i(\delta)}{\setsize{K_i(\delta)}}}
		+\biggabs{\sum_{j=1}^{N(\epsilon)}\eta_j(\epsilon)-1}
			\frac{\eta_i(\delta)}{\setsize{K_i(\delta)}}\\&
	\le\sum_{j=1}^{N(\epsilon)}\eta_j(\epsilon)
		\frac{4\delta\eta_i(\delta)}{\setsize{K_i(\delta)}}
		+\frac{\epsilon\eta_i(\delta)}{\setsize{K_i(\delta)}}
	\le\frac{4\delta\eta_i(\delta)}{\setsize{K_i(\delta)}}
		+\frac{\epsilon\eta_i(\delta)}{\setsize{ K_i(\delta)}}\text.
\end{align*}
This together with \eqref{eq:D2first} gives the bound
\begin{align}\label{eq:D2}
 \cD_2(\epsilon,\delta)
&\leq \sum_{i=1}^{N(\delta)} \constf \setsize{K_i^r(\delta)}\biggl(\frac{4\delta \eta_i(\delta)}{\setsize{K_i(\delta)}}
+
\frac{\epsilon \eta_i(\delta)}{\setsize{ K_i(\delta)}}\biggr)
\leq
4\constf\delta
+
\constf \epsilon .
\end{align}
Thus, the estimates of $\cD_1(\epsilon,\delta)$ and $\cD_2(\epsilon,\delta)$ in \eqref{eq:D1} and \eqref{eq:D2} together yield
\begin{align}\label{eq:cD}
  \cD(\epsilon,\delta)\leq
  \constf \epsilon +\delta(9\constf+11\constb) +\sum_{i=1}^{N(\delta)}\frac{5\eta_i(\delta)}{\setsize{ Q_i}} \bigl(4 b(Q_i^r)+(\constf+\constb)\setsize{\partial^r(Q_i)}\bigr)
\end{align}
for all $\delta>0$ and $\ve\in\Ioo0{\ve_0(\delta)}$.
Applying part~\ref{etaNc} of Lemma~\ref{la:etaN} we see
\begin{align*}
 \lim_{\delta\searrow 0}\lim_{\epsilon\searrow 0} \cD(\epsilon,\delta)=0\text.
\end{align*}
Using a Cauchy argument and the fact that $\B$ is a Banach space we obtain that there exists an element $f^*\in \B$ with
\[
  \lim_{\epsilon\searrow 0}
  \biggnorm{\sum_{j=1}^{N(\epsilon)}\eta_j(\epsilon)\frac{\spr{f_{j}^r(\epsilon)}{\P_j^r(\epsilon)}}{\setsize{ K_j(\epsilon)}}-f^*}=0.
\]
In order to get the error estimate for finite $\delta>0$,
we use \eqref{eq:cD}, \cref{la:etaN}\ref{etaNc},
and \eqref{eq:monotone0} as follows
\begin{align*}&
  \biggnorm{\sum_{j=1}^{N(\delta)}\eta_j(\delta)
    \frac{\spr{f_{j}^r(\delta)}{\P_j^r(\delta)}}{\setsize{ K_j(\delta)}}
    -f^*}
  =\lim_{\ve\dnto0}\cD(\ve,\delta)\\&
  \le(9\constf+11\constb)\delta
    +\sum_{i=1}^{N(\delta)}\frac{5\eta_i(\delta)}{\setsize{ Q_i}}
    \bigl(4b(Q_i^r)+(\constf+\constb)\setsize{\partial^r(Q_i)}\bigr)\\&
  \le(9\constf+11\constb)\delta+5(4+\constf+\constb)\beta(\delta)
    \text.\qedhere
\end{align*}
\end{proof}

\section{Proof of the main theorem}\label{sec:example}
We will prove a slightly more explicit statement which
tracks the geometric error in terms of~$\varepsilon$
and the probabilistic error in terms of~$\kappa$ separately.
\Cref{thm:main} is implied by the choice $\kappa:=\sqrt\varepsilon$.
Recall that~$\B$ is the Banach space of bounded and right-continuous functions
from~$\R$ to~$\R$.

\begin{Theorem}
  Let~$G$ be a finitely generated amenable group.
  Further, let~$\cA\in \cB(\R)$ and $(\Omega=\cA^G,\cB(\Omega),\P)$
  a probability space such that~$\P$ satisfies \labelcref{M1,M2,M3}.
  Finally, let~$\cU$ be an admissible set
  of admissible fields with common bound~$\constf[\cU]$,
  cf.\ \cref{def:admissible}.
  \par
  Then, there exists a limit element $f^*\in\B$ with the following properties.
  For each F{\o}lner sequence $(\Lambda_n)$, $\ve\in\Ioo0{1/10}$ and $\kappa>0$,
  there exist $j_0(\ve)\in\N$, which is independent of~$\kappa$
  and~$\constf[\cU]$,
  and $a(\ve,\kappa,\constf[\cU]),b(\ve,\kappa,\constf[\cU])>0$,
  such that for all $j\in\N$, $j\ge j_0(\ve)$,
  there is an event $\Omega_{j,\ve,\kappa,\constf[\cU]}\in\Borel(\Omega)$
  with the properties
  \begin{equation*}
    \P(\Omega_{j,\ve,\kappa,\constf[\cU]})
    \ge1-b(\ve,\kappa,\constf[\cU])
    \exp\bigl(-a(\ve,\kappa,\constf[\cU])\setsize{\Lambda_j}\bigr)
  \end{equation*}
  and
  \begin{align*}
    \biggnorm{\frac{f(\Lambda_j,\omega)}{\setsize{\Lambda_j}}-f^*}&
    \le(37\constf+47\constb+46)\sqrt\ve+\kappa
    \qtext{ for all $\omega\in\Omega_{j,\ve,\kappa,\constf[\cU]}$ and all $f\in\cU$.}
  \end{align*}
\end{Theorem}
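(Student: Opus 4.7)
The plan is to interpolate the quantity $f(\Lambda_j,\omega)/\setsize{\Lambda_j}$ and the candidate limit $f^*$ through two intermediate quantities already studied in Sections~4--6:
$$
A_f(\ve,\omega) := \sum_{i=1}^{N(\ve)}\eta_i(\ve)\frac{\spr{f_i^r(\ve)}{L^{r,\omega}_{i,j}(\ve)}}{\setsize{K_i(\ve)}},
\qquad
B_f(\ve) := \sum_{i=1}^{N(\ve)}\eta_i(\ve)\frac{\spr{f_i^r(\ve)}{\P_i^r(\ve)}}{\setsize{K_i(\ve)}}.
$$
Before invoking the three approximation lemmas, I would fix, once and for all, a nested F{\o}lner sequence $(Q_n)$ satisfying the monotonicity requirement~\eqref{eq:monotone0} together with the sharpening $\beta'_n\le(2n)^{-1}$ from Remark~\ref{rem:beta_le_ve}; existence follows by passing to a subsequence, and depends only on the group $G$, the correlation length $r$ and the uniform boundary term of the admissible set $\cU$. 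The threshold $j_0(\ve)$ of the theorem is then the one provided by \cref{thm:STP} applied to this $(Q_n)$, which indeed depends only on $\ve$ and not on $\kappa$ or $\constf[\cU]$.

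With the auxiliary sequence fixed, I would simply chain the three main lemmas. Step~1 is the deterministic geometric estimate: \cref{la:quasi-first-step} bounds $\norm{f(\Lambda_j,\omega)/\setsize{\Lambda_j}-A_f(\ve,\omega)}$ by $(9\constf+15\constb)\ve+12(2+\constf+\constb)\beta(\ve)$ for every $\omega\in\Omega$ and every $j\ge j_0(\ve)$. Step~2 is the probabilistic Glivenko--Cantelli step: \cref{la:gc-main-lemma} yields, for every $\kappa>0$, the event $\Omega_{j,\ve,\kappa,\constf[\cU]}\in\Borel(\Omega)$ of probability at least $1-b(\ve,\kappa,\constf[\cU])\exp(-a(\ve,\kappa,\constf[\cU])\setsize{\Lambda_j})$, on which uniformly over $f\in\cU$
$$
\norm{A_f(\ve,\omega)-B_f(\ve)}\le 2\beta(\ve)+2(2\constb+\constf)\ve+\kappa.
$$
Step~3 is the Cauchy step: \cref{la:cauchy} produces the limit element $f^*\in\B$ and bounds $\norm{B_f(\ve)-f^*}$ by $(9\constf+11\constb)\ve+5(4+\constf+\constb)\beta(\ve)$; crucially, the construction of $f^*$ in \cref{la:cauchy} depends only on $f$ and on the auxiliary sequence $(Q_n)$, not on $(\Lambda_n)$, so the same $f^*$ works for any choice of F{\o}lner sequence.

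Combining the three bounds by the triangle inequality on the event $\Omega_{j,\ve,\kappa,\constf[\cU]}$, collecting coefficients gives
$$
\biggnorm{\frac{f(\Lambda_j,\omega)}{\setsize{\Lambda_j}}-f^*}
\le (20\constf+30\constb)\ve+(46+17\constf+17\constb)\beta(\ve)+\kappa.
$$
Using $\ve\le\sqrt{\ve}$ and $\beta(\ve)\le\sqrt{\ve}$ (by Remark~\ref{rem:beta_le_ve}, which is valid thanks to the special choice of $(Q_n)$ above), the right-hand side is bounded by $(37\constf+47\constb+46)\sqrt{\ve}+\kappa$. Since $\cU$ is admissible with common boundary term, $\constf$ and $\constb$ may be replaced uniformly by $\constf[\cU]$ and $\constb[\cU]$, and the estimate holds simultaneously for every $f\in\cU$.

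There is essentially no new obstacle: the geometric, probabilistic and Cauchy inputs are already proved, and the bookkeeping is routine arithmetic. The only subtlety worth flagging is verifying that the $f^*$ obtained does not depend on $(\Lambda_n)$; this is immediate from \cref{la:cauchy}, since the sequence $B_f(\ve)$ is defined purely in terms of the tiles $K_i(\ve)$ (chosen from $(Q_n)$) and the theoretical measures $\P_i^r(\ve)$, while $(\Lambda_n)$ enters only through the empirical-measure approximation which is controlled separately in Steps~1 and~2.
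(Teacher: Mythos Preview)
Your proposal is correct and follows essentially the same approach as the paper: the triangle inequality splits the error into the three pieces handled by \cref{la:quasi-first-step}, \cref{la:gc-main-lemma}, and \cref{la:cauchy}, and the bookkeeping of constants together with $\ve\le\sqrt\ve$, $\beta(\ve)\le\sqrt\ve$ from \cref{rem:beta_le_ve} yields the stated bound. Your justification that $f^*$ is independent of $(\Lambda_n)$---because $B_f(\ve)$ depends only on $(Q_n)$ and the theoretical measures---is slightly more direct than the paper's argument (which interleaves two F{\o}lner sequences into one and observes both are subsequences), but both are valid.
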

\begin{proof}
  We follow the path prescribed in the previous chapters and
  \begin{itemize}[nosep]
  \item quasi tile~$\Lambda_j$, $j\ge j_0(\ve)$,
    with~$K_i(\varepsilon)$, $i=1,\dotsc,N(\varepsilon)$,
    see \cref{thm:STP},
  \item approximate $\setsize{\Lambda_j}^{-1}f(\Lambda_j,\omega)$
    with the empirical measures $L_{i,j}^{r,\omega}(\varepsilon)$,
    cf.~\eqref{def:Lijromegar} and \cref{la:quasi-first-step},
  \item express the empirical measures by their limiting counterparts~%
    $\P_i^r(\varepsilon)$ with \cref{la:gc-main-lemma}, and
  \item use the Cauchy property of the remaining terms
    to obtain a limiting function~$f^*$, see \cref{la:cauchy}.
  \end{itemize}
  To confirm the error estimate, we employ the triangle inequality
  \begin{align*}
    \biggnorm{\frac{f(\Lambda_j,\omega)}{\setsize{\Lambda_j}}-f^*}&
    \le\biggnorm{\frac{f(\Lambda_j,\omega)}{|\Lambda_j|}-
      \sum_{i=1}^{N(\epsilon)}\eta_i(\epsilon)
        \frac{\spr{f_{i}^r(\epsilon)}{L^{r,\omega}_{i,j}(\epsilon)}}
             {\setsize{ K_i(\epsilon)}}
    }\\&\quad
    +\biggnorm{
      \sum_{i=1}^{N(\epsilon)}\eta_i(\epsilon)
        \frac{\spr{f_{i}^r(\epsilon)}{L^{r,\omega}_{i,j}(\epsilon)}}
             {\setsize{ K_i(\epsilon)}}
      -\sum_{i=1}^{N(\epsilon)}\eta_i(\epsilon)
        \frac{\spr{f_{i}^r(\epsilon)}{\P_i^r(\epsilon)}}
             {\setsize{ K_i(\epsilon)}}
    }\\&\quad
    +\biggnorm{\sum_{i=1}^{N(\epsilon)}\eta_i(\epsilon)
      \frac{\spr{f_{i}^r(\epsilon)}{\P_i^r(\epsilon)}}
           {\setsize{ K_i(\epsilon)}}-f^*}
    =:\Delta(\ve,j,\omega)\text.
  \end{align*}
  By \cref{la:quasi-first-step,la:gc-main-lemma,la:cauchy},
  we immediately get that there is an event $\tilde\Omega\in\Borel(\Omega)$
  with full probability $\P(\tilde\Omega)=1$ such that
  $\lim_{\ve\dnto0}\lim_{j\to\infty}\Delta(\ve,j,\omega)=0$
  for all $\omega\in\tilde\Omega$.
  Furthermore, \cref{la:gc-main-lemma} provides the event
  $\Omega_{j,\ve,\kappa,\constf[\cU]}$ with probability as large as claimed,
  and by collecting all the error terms and by \cref{rem:beta_le_ve},
  we see that for all $\ve\in\Ioo0{1/10}$, $j\ge j_0(\ve)$, $\kappa>0$,
  $f\in\cU$, and $\omega\in\Omega_{j,\ve,\kappa,\constf[\cU]}$,
  see \cref{la:gc-main-lemma},
  \begin{align*}
    \biggnorm{\frac{f(\Lambda_j,\omega)}{\setsize{\Lambda_j}}-f^*}&
    \le
    (20\constf+30\constb)\ve+(17\constf+17\constb+46)\beta(\ve)+\kappa\\&
    \le(37\constf+47\constb+46)\sqrt\ve+\kappa
    \text.
  \end{align*}
  Note the uniformity of the last inequality for all $f\in \cU$ is also discussed in Remark \ref{rem:beta_le_ve}.

  To see that the limit $f^*$ does not depend on the specific choice of $(\Lambda_j)$ use the following argument: Every two F{\o}lner sequences can be combined two one F{\o}lner sequence, which yields by our theory a limit $f^*\in\mathbb B$. As the two original sequences are subsequences, they lead to the same limit function $f^*$.
\end{proof}

\appendix
\small
\section{Conditional resampling}

In \cref{lemma:independence},
we need to remove the dependent parts of samples.
We achieve this by resampling the critical parts of the samples,
keeping the large enough already independent parts.
This is done by augmenting the probability space
to provide room for more random variables.
The problem of resampling turned out to be treatable in a much broader setting,
so a general tool is provided here.

\begin{Theorem}[Resampling]\label{thm:resampling}
  Let $(\Omega,\cA,\P)$ be a Borel probability space, $(S,\cS)$ a Borel space,
  and $X\from\Omega\to S$ an $S$-valued random variable with distribution
  $\P_X:=\P\circ X^{-1}\from\cS\to\Icc01$.
  Further let~$I$ be an index set, and for each $j\in I$,
  let $\cY_j\subseteq\cS$ be a $\sigma$-algebra.
  \par
  Then, there is a probability space $(\superOmega,\superA,\superP)$
  such that for all $j\in I$, maps as indicated in the following diagram
  exist and are measure preserving, and all the diagrams commute almost surely.
  \begin{center}
  \begin{tikzpicture}
    \node (superOmega) {$(\superOmega,\superA,\superP)$};
    \node (Yj) [below=of superOmega] {$(S,\cY_j,\P_X|_{\cY_j})$};
    \node (Sj) [right=of Yj] {$(S,\cS,\P_X)$};
    \node (S) [left=of Yj] {$(S,\cS,\P_X)$};
    \node (Omega) [above=of S] {$(\Omega,\cA,\P)$};
    \draw (Omega) edge [->,very thick] node [auto,swap] {$X$} (S);
    \draw (superOmega)
      edge [->,thick] node [auto] {$\Pi_0$} (Omega)
      edge [->,thick] node [auto] {$\superX$} (S)
      edge [->,thick] node [auto] {$X_j$} (Sj);
    \draw (Yj)
      edge [<-,thick] node [auto] {$\id_S$} (S)
      edge [<-,thick] node [auto,swap] {$\id_S$} (Sj);
  \end{tikzpicture}
  \end{center}
  This means in particular that~$\Pi_0$ is measure preserving,
  and that, for all $j\in I$,
  \begin{enumerate}[(i), noitemsep]
    \item\label{resampling:distribution}
      the random variable~$X_j$ has distribution~$\P_X$,
    \item\label{resampling:equality}
      for each measure space $(T,\cT)$ and each
      $\cY_j$-$\cT$-measurable map $g\from(S,\cY_j)\to(T,\cT)$,
      we have $g(\superX)=g(X_j)$ $\superP$-almost surely.
  \end{enumerate}
  Furthermore, the joint distribution of $(X_j)_{j\in I}$
  has the following properties.
  \begin{enumerate}[(i), noitemsep, resume]
    \item\label{resampling:condinde}
      For each finite subset $F\subseteq I$
      and $A_F=\bigtimes_{j\in F}A_j$,
      where $A_j\in\cS$, we have $\P_X$-almost surely that
      \begin{equation*}
        \superP(X_F\in A_F\given\superX=\argmt)
        =\prod\nolimits_{j\in F}\superP(X_j\in A_j\given\superX=\argmt)
        =\prod\nolimits_{j\in F}\P_X(A_j\given\cY_j)\text.
      \end{equation*}
      In particular, the random variables~$X_j$, $j\in I$,
      are independent when conditioned on~$\superX$.
    \item\label{resampling:independ}
      If, for a (not necessarily finite) subset $J\subseteq I$,
      the $\sigma$-algebras $\cY_j$, $j\in J$, are $\P_X$-independent,
      then the random variables~$X_j$, $j\in J$, are $\superP$-independent.
  \end{enumerate}
\end{Theorem}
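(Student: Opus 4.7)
The plan is to realize $(\superOmega, \superA, \superP)$ as the product $\Omega \times S^I$ equipped with its product $\sigma$-algebra, with $\Pi_0$ and $X_j$ the natural projections and $\superX := X \circ \Pi_0$. The technical ingredient is the existence, for each $j \in I$, of a regular conditional probability kernel $\kappa_j \from S \times \cS \to \Icc01$ representing $\P_X(\argmt \given \cY_j)$; such a kernel exists because $(S, \cS)$ is a Borel space. With $\kappa_j$ in hand, the intended construction of $\superP$ is the mixture ``conditional on $\Pi_0 = \omega$, the coordinates $X_j$ are independent with $X_j \sim \kappa_j(X(\omega), \argmt)$''.

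Concretely, specify $\superP$ on cylinder sets by
\begin{equation*}
  \superP\bigl(\Pi_0^{-1}(A) \isect \Isect_{j \in F} X_j^{-1}(B_j)\bigr)
  := \int_A \prod_{j \in F} \kappa_j(X(\omega), B_j) \dd\P(\omega),
\end{equation*}
for $A \in \cA$, finite $F \subseteq I$, and $B_j \in \cS$. These finite-dimensional marginals are consistent (since each $\kappa_j(s, \argmt)$ is a probability measure), so Kolmogorov's extension theorem yields a unique probability measure $\superP$ on $(\superOmega, \superA)$. By construction $\Pi_0$ pushes $\superP$ forward to $\P$, so $\Pi_0$ is measure preserving and $\superX$ has distribution $\P_X$.

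It then remains to verify the four listed properties. For (i), marginalize: $\superP(X_j \in B) = \int \kappa_j(X(\omega), B) \dd\P(\omega) = \int \kappa_j(s, B) \dd\P_X(s) = \P_X(B)$, the last equality being the defining property of~$\kappa_j$. For (ii), the defining equation of regular conditional probability, specialized to sets $C \in \cY_j$, yields $\kappa_j(s, A) = \mathbf{1}_A(s)$ for $\P_X$-a.e.\ $s$ and every $A \in \cY_j$; this gives $\{X_j \in A\} = \{\superX \in A\}$ up to $\superP$-null sets, and a standard approximation (indicators, then simple functions, then general measurable maps) lifts this to $g(\superX) = g(X_j)$ $\superP$-a.s.\ for any $\cY_j$-$\cT$-measurable~$g$. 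Property (iii) is immediate from the product form of the defining integral. For (iv), if the $\cY_j$ are $\P_X$-independent for $j \in J$, then for any finite $F \subseteq J$,
\begin{equation*}
  \superP(X_j \in B_j,\, j \in F)
  = \int \prod_{j \in F} \kappa_j(s, B_j) \dd\P_X(s)
  = \prod_{j \in F} \int \kappa_j(s, B_j) \dd\P_X(s)
  = \prod_{j \in F} \P_X(B_j),
\end{equation*}
the middle equality using that each $\kappa_j(\argmt, B_j)$ is $\cY_j$-measurable together with $\P_X$-independence of the $\cY_j$.

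The two main technical points are the following. First, Kolmogorov's extension theorem on the possibly uncountable product $\Omega \times S^I$ requires consistency of the finite-dimensional marginals (routine here from the product form of the integrand) and relies on the Borel-space hypothesis, which is assumed. Second, property (ii) rests on the crucial identity $\kappa_j(s, A) = \mathbf{1}_A(s)$ for $A \in \cY_j$ $\P_X$-a.s.\ in~$s$; this follows from $\int_C \kappa_j(s, B) \dd\P_X(s) = \P_X(C \isect B)$ for $C \in \cY_j$, $B \in \cS$, by specializing to $B = A$ and $C \in \{A, A^c\}$ and using the $\cY_j$-measurability of $\kappa_j(\argmt, A)$.
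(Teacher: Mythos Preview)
Your proposal is correct and follows essentially the same route as the paper: both construct $\superOmega=\Omega\times S^I$ with the coordinate projections, define $\superP$ via Kolmogorov extension using the conditional probabilities $\P_X(\argmt\given\cY_j)$ as transition kernels, and verify (i)--(iv) directly from this product structure. Your explicit invocation of regular conditional probability kernels $\kappa_j$ is slightly more careful than the paper's presentation, and your argument for (ii) via $\kappa_j(s,A)=\mathbf 1_A(s)$ differs cosmetically from the paper's joint-distribution computation, but the substance is the same.
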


Since~$\Pi_0$ is measure preserving, $(\superOmega,\superA,\superP)$
extends $(\Omega,\cA,\P)$.
Property~\ref{resampling:distribution} justifies the name resampling.
Statement~\ref{resampling:equality}
says that in~$X_j$ the information contained in~$\cY_j$
is preserved throughout the resampling, $j\in I$.
Point~\ref{resampling:condinde} states that the new random variables
copied only the information from~$\cY_j$, $j\in I$, and not more.
In~\ref{resampling:independ},
we learn how to provide independence of the resampling random variables.

\begin{proof}
  We define the spaces and maps as follows:
  \begin{align*}
    \superOmega:=\Omega\times S^I&\textq,
    \superA:=\cA\tensor\cS^{\tensor I}
    \text,\\
    \Pi_0\from\superOmega\to\Omega&\textq,
    \Pi_0(\omega,(s_j)_{j\in I}):=\omega
    \text,\\
    \superX\from\superOmega\to S&\textq,
    \superX(\omega,(s_j)_{j\in I}):=X(\omega)
    \text,\\
    X_j\from\superOmega\to S&\textq,
    X_j(\omega,(s_{k})_{k\in I}):=s_j
  \end{align*}
  We now define the measure~$\superP$ via Kolmogorov's extension theorem,
  see \cite[Theorem~14.36]{Klenke2008}.
  We need a consistent family of probability measures.
  For a more unifying notation, we augment $I_0:=\{0\}\dunion I$.
  Fix a finite subset $F\subseteq I_0$.
  If $0\in F$, we define a probability measure
  $\P^F\from\cA\tensor\cS^{\tensor F\setminus\{0\}}\to\Icc01$.
  In case $0\notin F$, we define a probability measure
  $\P^F\from\cS^{\tensor F}\to\Icc01$.
  If $0\in F$, then choose $A_0\in\cA$,
  otherwise, let $A_0:=\Omega$.
  For all $j\in F\setminus\{0\}$ we let $A_j\in\cS$.
  Now let $A_F:=\bigtimes_{j\in F}A_j$ and
  \begin{equation}\label{eq:defPF}
    \P^F(A_F)
    :=\E\Bigl[\ifu{A_0}\prod\nolimits_{j\in F\setminus\{0\}}
      \P_X(A_j\given\cY_j)\circ X\Bigr]
    \text.
  \end{equation}
  Here, $\E$ denotes integration with respect to~$\P$.
  By the extension theorem for measures, see \cite[Theorem~1.53]{Klenke2008},
  \eqref{eq:defPF} defines a probability measure.
  The family $(\P^F)_{\text{$F\subseteq I$ finite}}$ is consistent.
  For example, for finite subsets $0\notin F\subseteq J\subseteq I$
  with the projection $\Pi^J_F\from S^J\to S^F$
  and $A_F=\bigtimes_{j\in F}A_j$ with $A_j\in\cS$,
  we have $(\Pi^J_F)^{-1}(A_F)=A_F\times\bigtimes_{j\in J\setminus F}S$.
  Thus,
  \begin{equation*}
    \P^J\bigl((\Pi^J_F)^{-1}(A_F)\bigr)
    =\E_X\Bigl[\prod\nolimits_{j\in F}\P_X(A_j\given\cY_j)
      \prod\nolimits_{j\in J\setminus F}\P_X(S\given\cY_j)\Bigr]
    =\P^F(A_F)\text,
  \end{equation*}
  where~$\E_X$ is integration with respect to~$\P_X$.
  The remaining cases $0\in F\subseteq J$, and $0\notin F$ but $0\in J$
  work analogously.
  By Kolmogorov's extension theorem, we have exactly one measure
  $\superP:=\varprojlim_{F\subseteq I}\P^F\from\superA\to\Icc01$.
  \par
  We now verify the properties of~$\superP$.
  Let us first check, that~$\Pi_0$ is measure preserving.
  Indeed, for $A\in\cA$, we have
  \begin{equation*}
    \superP(\Pi_0\in A)
    =\P^{\{0\}}(A)
    =\E[\ifu A]
    =\P(A)\text.
  \end{equation*}
  Now we already know that $\superX=X\circ\Pi_0$
  is measure preserving, too.
  \begin{enumerate}[wide]
  \item[Ad \ref{resampling:distribution}:]
    For all $j\in I$ and $B\in\cS$, we have
    \begin{equation*}
      \superP(X_j\in B)
      =\P^{\{j\}}(B)
      =\E_X[\P_X(B\given\cY_j)]
      =\E_X[\ifu B]
      =\P_X(B)\text.
    \end{equation*}
  \item[Ad \ref{resampling:equality}:]
    Let $j\in I$, $(T,\cT)$ be a measure space and $g\from S\to T$
    be $\cY_j$-$\cT$-measurable.
    We determine the joint distribution of~$X$ and~$X_j$.
    By~\eqref{eq:defPF}, we have, for $B,B'\in\cT$,
    that $A:=g^{-1}(B)\in\cY_j$ as well as $A':=g^{-1}(B')\in\cY_j$, and
    \begin{align}
      \superP(g(\superX)\in B,g(X_j)\in B')&
      =\superP(\superX\in A,X_j\in A')
      =\P^{\{0,j\}}(X^{-1}(A)\times A')\notag\\&
      =\E[\ifu{X^{-1}(A)}\P_X(A'\given\cY_j)\circ X]
      =\E_X[\ifu A\ifu{A'}]\notag\\&
      =\P_X(A\isect A')
      =\superP(\superX\in A\isect A')
      =\superP(g(\superX)\in B\isect B')\text,\label{eq:isectmeas}
    \end{align}
    where in the last line, we used that
    $A\isect A'=g^{-1}(B)\isect g^{-1}(B')=g^{-1}(B\isect B')$.
    Now, since the rectangles $\{B\times B'\mid B,B'\in\cT\}$
    are stable under intersections and generate $\cT\tensor\cT$,
    equation~\eqref{eq:isectmeas} determines the distribution of
    $(g(\superX),g(X_j))\from\superOmega\to T^2$.
    Note, that the measure which is concentrated on the diagonal
    $\{(t,t)\mid t\in T\}$ with both marginals equal to
    $\P_X\circ g^{-1}$ satisfies~\eqref{eq:isectmeas}, too.
    Therefore, $\superP(g(\superX)=g(X_j))=1$.
  \item[Ad \ref{resampling:condinde}:]
    Fix a finite subset $F\subseteq I$ and $A_j\in\cS$ for $j\in F$,
    and let $A_F:=\bigtimes_{j\in F}A_j$.
    For all $B\in\cS$, we have
    \begin{align*}
      \superE[\ifu{\{\superX\in B\}}\superP(X_F\in A_F\given\superX)]&
      =\superE[\ifu{\{\superX\in B\}}\superE[\ifu{\{X_F\in A_F\}}\given\superX]]
      =\superE[\ifu{\{\superX\in B\}}\ifu{\{X_F\in A_F\}}]\\&
      =\superP[\superX\in B,X_F\in A_F]
      =\P^{\{0\}\union F}(X^{-1}(B)\times A_F)\\&
      =\E\Bigl[\ifu{X^{-1}(B)}\prod\nolimits_{j\in F}
        \P_X(A_F\given\cY_j)\circ X\Bigr]\\&
      =\superE\Bigl[\ifu{\{\superX\in B\}}\prod\nolimits_{j\in F}
        \P_X(A_F\given\cY_j)\circ\superX\Bigr]\text.
    \end{align*}
    Since $\sigma(\superX)=\{\{\superX\in B\}\mid B\in\cS\}$, this proves
    \begin{equation*}
      \superP(X_F\in A_F\given\superX)
      =\prod\nolimits_{j\in F}\P_X(X_j\in A_j\given\cY_j)\circ\superX
    \end{equation*}
    $\superP$-almost surely.
    For $F=\{j\}$, we get
    $\superP(X_j\in A_j\given\superX)=\P_X(X_j\in A_j\given\cY_j)$, too.
    The claim is the factorized version of these statements,
    which exist because $(S,\cS)$ is a Borel space.
  \item[Ad \ref{resampling:independ}:]
    For $F\subseteq J$ finite and $A_F=\bigtimes_{j\in F}A_j$
    with $A_j\in\cS$, we use \ref{resampling:condinde} to get
    \begin{align*}
      \superP(X_F\in A_F)&
      =\superE[\superP(X_F\in A_F\given\superX)]\\&
      =\superE\Bigl[\prod\nolimits_{j\in F}
        \P_X(A_j\given\cY_j)\circ\superX\Bigr]
      =\E_X\Bigl[\prod\nolimits_{j\in F}\P_X(A_j\given\cY_j)\Bigr]
        \text.
    \end{align*}
    The $\sigma$-algebras~$\cY_j$, $j\in F\subseteq J$,
    are $\P_X$-independent.
    This independence is inherited by
    $\cY_j$-measurable functions like $\P_X(A_j\given\cY_j)$.
    We can therefore continue the calculation with
    \begin{align*}
      \superP(X_F\in A_F)&
      =\prod\nolimits_{j\in F}\E_X\bigl[\P_X(A_j\given\cY_j)\bigr]
      =\prod\nolimits_{j\in F}\P_X(A_j)
      =\prod\nolimits_{j\in F}\superP(X_j\in A_j)\text.
    \end{align*}
    Since the cylinder sets generate $\cS^{\tensor J}$,
    this is the claimed $\superP$-independence.\qedhere
  \end{enumerate}
\end{proof}

\section{Proof summary for montilable amenable groups}\label{monotile}

The proofs of \cite{SchumacherSV-16} concerning the case $G=\Z^d$ can be generalized to apply to a
finitely generated amenable group~$G$ if it satisfies the tiling property~\labelcref{tiling}.

We list the major changes which are necessary for this purpose:
\begin{enumerate}[(a)]
  \item\label{pointa}
    Instead of defining the set~$T_{m,n}$ using multiples of~$m$
    (c.\,f.~eq.~(3.1) in \cite{SchumacherSV-16}),
    we employ the grid~$T_m$, namely we set
    \begin{align}\label{def:Tmn}
      T_{m,n}:=\{t\in T_m\mid \Lambda_m t\subseteq \Lambda_n\}
    \end{align}
    Thus, $T_{m,n}$ contains the elements of~$T_m$
    which correspond to translates of~$\Lambda_m$
    which are completely contained in~$\Lambda_n$.
    Using this definition, the empirical measures are
    $L_{m,n}^\omega$ and $L_{m,n}^{\omega,r}$ are given accordingly.
  \item\label{pointb}
    One needs to verify the following basic result.
    Given a tiling F{\o}lner sequence $(\Lambda_n)$, we have
    \begin{enumerate}[(i),noitemsep]
      \item for each $m\in\N$ the sequence $(\Lambda_mT_{m,n})_{n\in\N}$
        is a F{\o}lner sequence;
      \item for each $m,n\in\N$ we have
        $\Lambda_n\subseteq\partial^{\rho(m)}(\Lambda_n)\union\Lambda_m T_{m,n}$,
        where $\rho(m)=\diam(\Lambda_m)$; and
      \item for each $m\in\N$ we have
        $\lim_{n\to\infty}\setsize{\Lambda_n}/\setsize{T_{m,n}}=\setsize{\Lambda_m}$.
    \end{enumerate}
  \item Points \labelcref{pointa,pointb} allow to prove
    an equivalent version of Lemma 3.2 of \cite{SchumacherSV-16}
    in the situation of amenable groups with property~\labelcref{tiling},
    by following exactly the steps of the proof presented therein.
  \item Besides Lemma 3.2.~also Lemma 6.1.~needs to be slightly changed.
    In fact, again by using \labelcref{pointa,pointb}
    the proof can directly be adapted to the situation where~$G$
    is amenable and $(\Lambda_n)$ is a tiling F{\o}lner sequence.
  \item In the end, the proof of the main theorem reduces basically
    to an application of the triangle inequality,
    the new versions of Lemma 3.2 and Lemma 6.1 as well as
    (the original version of) Theorem 5.1.
    Note that Theorem 5.1 need not to be adapted
    as it is independent of the geometry.
\end{enumerate}



\end{document}